\newcommand*\circled[1]{\tikz[baseline=(char.base)]{
            \node[shape=circle,draw,inner sep=0pt] (char) {#1};}}
\newcommand*\circledzero[1]{\tikz[baseline=(char.base)]{
            \node[shape=circle,draw,inner sep=0.6pt] (char) {#1};}}
\newcommand{\name}[1]{DEVA}
\newtheorem{theorem}{Theorem}
\newcommand{\myapprox}{\raisebox{0.5ex}{\texttildelow}}
\newcommand\blfootnote[1]{%
  \begingroup
  \renewcommand\thefootnote{}\footnote{#1}%
  \addtocounter{footnote}{-1}%
  \endgroup
}
\begin{document}

\title{In-Vehicle Edge System for Real-Time Dashcam Video Analysis\footnote{This work was supported by the National Research Foundation of Korea (NRF) grants funded by the Korea government (MSIT) (No. 2023R1A2C1003984)}}

\author[a]{Seyul Lee}
\author[b]{Jayden King}
\author[b]{Young Choon Lee}
\author[c]{Hyuck Han}
\author[a]{Sooyong Kang}
\affil[a]{\small{Dept. of Computer Science, Hayang University, Seoul, South Korea}}
\affil[b]{\small{School of Computing, Macquarie University, Sydney, Australia}}
\affil[c]{\small{Dept. of Computer Science, Dongduk Women's University, Seoul, South Korea}}

\date{}
\maketitle

\blfootnote{
    \hypersetup{hidelinks}\emph{Email addresses}:
    \href{mailto:djm03178@hanyang.ac.kr}{\texttt{djm03178@hanyang.ac.kr}} (Seyul~Lee),
    \href{mailto:jayden.king@mq.edu.au}{\texttt{jayden.king@mq.edu.au}} (Jayden~King),
    \href{mailto:young.lee@mq.edu.au}{\texttt{young.lee@mq.edu.au}} (Young~Choon~Lee),
    \href{mailto:hhyuck96@dongduk.ac.kr}{\texttt{hhyuck96@dongduk.ac.kr}} (Hyuck~Han),
    \href{mailto:sykang@hanyang.ac.kr}{\texttt{sykang@hanyang.ac.kr}} (Sooyong~Kang)
}

\begin{abstract}
Modern vehicles equip dashcams that primarily collect visual evidence for traffic accidents. However, most of the video data collected by dashcams that is not related to traffic accidents is discarded without any use. In this paper, we present a use case for dashcam videos that aims to improve driving safety. By analyzing the real-time videos captured by dashcams, we can detect driving hazards and driver distractedness to alert the driver immediately. To that end, we design and implement a Distributed Edge-based dashcam Video Analytics system (\name{}), that analyzes dashcam videos using personal edge (mobile) devices in a vehicle. \name{} consolidates available in-vehicle edge devices to maintain the resource pool, distributes video frames for analysis to devices considering resource availability in each device, and dynamically adjusts frame rates of dashcams to control the overall workloads. The entire video analytics task is divided into multiple independent phases and executed in a pipelined manner to improve the overall frame processing throughput. We implement \name{} in an Android app and also develop a dashcam emulation app to be used in vehicles that are not equipped with dashcams. Experimental results using the apps and commercial smartphones show that \name{} can process real-time videos from two dashcams with frame rates of around 22\myapprox30 FPS per camera within 200 ms of latency, using three high-end devices.
\end{abstract}

\section{Introduction}\label{sec:introduction}

Dashcams are being widely used in vehicles nowadays to collect evidence for traffic accidents. Yet the vast majority of dashcam-produced video data does not capture traffic accidents and is discarded. Instead of wasting this video data, further value may be gained from them through video analytics such as real-time object detection and tracking for safe driving. However, real-time video analytics is a compute-intensive task that requires a large amount of computational resources which are not available in current commercial dashcams.

Offloading video analytics tasks to the cloud can be an alternative to in-situ processing in terms of resource requirements. However, it is not suitable when the real-time constraint is enforced on the video analytics tasks, as in object detection and tracking applications for safe driving, due to two reasons. First, the large and unstable communication latency between vehicles and cloud data centers makes it difficult to provide timely analysis of continuously generating video frames. Second, the huge amount of video data simultaneously generated from a large number of driving vehicles may flood the internet, affecting the quality of not only the video analytics services but also other internet services.

These issues can be resolved by using edge computing\footnote{In this paper, we refer to the term `edge computing' as computing in the `edge devices' such as smartphones and tablets, not the `edge server' in the edge of the internet (e.g., servers installed in 5G base stations).}, an alternative to cloud computing where data is processed physically close to their source. Edge devices are typically mobile with greater computational resources than dashcams, of which examples include smartphones, tablets, and laptops. Edge devices have a few advantages for real-time dashcam video analytics. First, they are widely used these days and we can expect one or more devices to be present in a driving vehicle, including the driver's smartphone. Second, recent edge devices are equipped with powerful resources but do not fully utilize them most of the time. The remaining resource capacity can be effectively used for video analytics. Third, most vehicles offer one or more charging ports, providing enough energy to edge devices for video analytics without draining their batteries. Finally, most edge devices have built-in cameras, allowing them to act as network-enabled dashcams for situations where a commercial dashcam is unavailable.

However, realizing real-time edge-based dashcam video analytics is not easy due to the following technical challenges:
\begin{itemize}
    \item \textbf{Heavy workload:} Real-time video analytics is computationally intensive, also accompanying a large amount of video data transfer from dashcams to edge devices. A single, average device may not be capable of analyzing video frames on time, necessitating the collaboration of multiple devices in a vehicle.
    \item \textbf{Unpredictable device connectivity:} The owner of each edge device has the full privilege to control the device, hindering the maintenance of a stable resource pool for video analytics. Existing devices may drop out or new devices may offer to contribute resources at any time.
    \item \textbf{Inter- and intra-device performance heterogeneity:} Edge devices may have different computing capacities from each other. Additionally, the computing capacity of an individual device may vary over time depending on the amount of owner-directed jobs and on physical characteristics such as device temperature.
    \item \textbf{Inter-video workload heterogeneity:} When using multiple dashcams, each video stream may require different AI-based analysis models that generate dissimilar workload amounts from each other.
\end{itemize}

In this paper, we present \name{}, a Distributed Edge-based dashcam Video Analytics system, that enables real-time dashcam video analytics using in-vehicle edge devices. The system analyzes video frames from two cameras (outward-facing and inward-facing) to help with safe driving\footnote{Although the current design of \name{} assumes two cameras, it can be easily extended to accommodate more than two cameras. Further, the usage of \name{} can be extended to other scenarios, such as military operations as well as rescue operations, with helmet-mount cameras.}. Specifically, video frames from the outward-facing camera (outer cam) are analyzed to detect potential driving hazards, and those from the inward-facing camera (inner cam) that captures the driver's face and hands are processed to identify driver distractedness. The whole process is performed fast enough to make prompt decisions based on the analysis results, increasing driving safety. The preliminary study of \name{} can be found in~\cite{edasha}.

\name{} implements several techniques to ensure low-latency and high throughput analytics, addressing the technical challenges above. First, it harnesses multiple edge devices in a vehicle to provide sufficient resources for compute-intensive video analytics tasks. It extracts video frames from two video sources and distributes them across consolidated edge devices to concurrently process them for higher throughput. Second, to cope with the changing resource availability due to the unpredictable device connectivity and intra-device performance heterogeneity, \name{} persistently monitors the connectivity and real-time performance of each device, and dynamically adjusts the dashcams' frame transfer rate to control the overall workloads considering the currently available computing capacity. Third, to overcome the inter-device performance heterogeneity and inter-video workload heterogeneity, \name{} implements a frame scheduler that distributes video frames taking both the device performance and video source into consideration. Finally, \name{} divides the entire execution flow into multiple independent phases and executes in a pipelined manner to increase the overall frame processing throughput.

We have implemented \name{} as an Android app and used heterogeneous smartphones to evaluate its performance in various environments assuming realistic scenarios. In addition, we have implemented another Android app that emulates a dashcam equipped with required functionalities for \name{} (i.e., communicating with an edge device and controlling its frame transfer rate), as such features are not present in commercial dashcams yet. Experimental results show that \name{} enables near real-time video analysis using commercial edge devices.
The contributions of this paper are threefold:
\begin{itemize}
    \item \textbf{Value addition to unused video data:} We propose an effective use case of dashcam video data that is otherwise discarded without any use.
    \item \textbf{Enabling techniques for edge-based video analytics:} We design various techniques for low-latency, high-throughput real-time video analytics using unused resources in one or more nearby mobile devices.
    \item \textbf{Application-level implementation:} We implement the proposed system in mobile applications without modifying the Android platform. In addition, our dashcam emulation application enables us to provide video analytics services for vehicles that lack actual dashcams.
\end{itemize}

The rest of this paper is organized as follows. In section~\ref{sec:related_work}, we present prior efforts related to this work. We design the proposed system in Sections~\ref{sec:design}. In Section~\ref{sec:evalutation}, we evaluate the performance of the proposed system under various environments. We then present a few lessons learned while evaluating the proposed system in Section~\ref{sec:lesson}. Finally, we conclude our work in Section~\ref{sec:conclusion}.

\section{Related Work}\label{sec:related_work}
Until a decade ago, the main research avenue to provide computationally heavy data processing services using data generated at the edge of the network was to deploy the services on the cloud, which accompanies a high and unstable data transfer latency due to the long distance between data source and cloud data centers. Recently, as the amount of computational resources in mobile/IoT devices is ever-increasing, many attempts have been made to use them for data processing to avoid high latency~\cite{mobilePhoneComputing,rejeb2022big}. One such effort is real-time video analytics using mobile/IoT devices.

In \cite{huynhDeepMonMobileGPUbased2017}, they proposed various optimization techniques for offloading convolutional layers of a deep learning model to mobile GPU to accelerate image processing in mobile devices. In \cite{10.1145/3466772.3467037}, they proposed a reliable motion extraction scheme from video frames to expedite object tracking in mobile devices. In \cite{sunDetectingCounterfeitLiquid2022}, they presented an application of image analytics using smartphones, which detects adulterated liquid products without opening the bottles. A resource-efficient video analytics model based on a spatial attention mechanism and multidomain networks has been proposed in~\cite{visualObjectDetectionAndTracking} to detect and track visual objects in resource-constrained IoT devices. While the above works showed the feasibility of using mobile/IoT devices for image analytics, they used a single device without considering device consolidation for higher throughput and lower latency.

A more recent work, CrossVision \cite{10202594}, considered video analytics using multiple smart cameras in proximity. It reduces the image analysis latency by removing the RoI (Region-of-Interest) redundancy in overlapped FoVs (Field-of-Views) among proximate cameras. While CrossVision does not migrate tasks from one device (smart camera) to another, it achieves workload balancing by imposing the analysis task of redundant RoI to a single light-loaded device among those sharing the RoI. However, it is not effective in videos without overlapped FoVs like in \name{}.

To overcome the resource limitation in a single mobile device, some prior works have managed to make a local network of nearby devices to provide a larger resource pool for heavy tasks in mobile devices. Cloudlet exploits a nearby server to provide VM-based customized services to mobile users~\cite{cloudlet}. FemtoClouds leverages mobile devices to provide cloud services at the network edge~\cite{habakkarimFemtoCloudsLeveraging2015}. In \cite{dangeSchedulingTaskCollaborative2016} and \cite{fernandoMobileCrowdComputing2012}, they proposed rank-based and work stealing-based task scheduling schemes, respectively, for consolidated mobile devices. The above efforts provide architectures and scheduling schemes in general, for consolidated mobile devices, without considering the unique characteristics of video analytics applications.

Other research areas for networked mobile devices include multi-mobile computing and mobile collaboration. Prior efforts in multi-mobile computing aim to use resources (hardware, software, data, and application) in nearby devices to enhance user experiences beyond what a single device can offer \cite{rio,mobileplus,collaboroid,flux,samd,alduaijHeterogeneousMultiMobileComputing2019}. They use Device-to-Device (D2D) communications to establish connections among nearby mobile devices~\cite{googleNearbyConnections2021,jenschke2023nearby}. Mobile collaboration enables mobile users to achieve a shared goal by collaborating with each other. For example, SignalGuru attempts to reduce car fuel consumption by predicting traffic light changes through collaborative analysis of traffic lights captured by dash-mounted smartphones in multiple cars~\cite{koukoumidisSignalGuruLeveragingMobile2011}.

For the specific purpose of improving driving safety, various video analytics models have been proposed to detect road hazards including other vehicles \cite{chadwickDistantVehicleDetection2019,rybskiVisualClassificationCoarse2010,tehraniniknejadOnRoadMultivehicleTracking2012,zhangRealtimeVehicleDetection2018}, pedestrians  \cite{tomeDeepConvolutionalNeural2016,dominguez-sanchezPedestrianMovementDirection2017,liuRobustFastPedestrian2013}, and general road obstructions \cite{creusotRealtimeSmallObstacle2015,moralesrosalesOnroadObstacleDetection2018,jiaRealtimeObstacleDetection2015}. Video analytics models to detect driver distractedness have also been proposed \cite{yanDriverBehaviorRecognition2016,tranRealtimeDetectionDistracted2018,kapoorRealTimeDriverDistraction2020,huangHCFHybridCNN2020}. Some prior works proposed video analytics systems that detect both driving hazards and driver distractedness \cite{jainCarThatKnows2015,rezaeiLookDriverLook2014,dey2021context}, which is the main goal of \name{}.
Such video analytics models have different computational complexities each other, affecting the frame processing throughput. For example, a pedestrian detection model proposed in \cite{tomeDeepConvolutionalNeural2016} achieved a frame rate of 21.7 FPS, and the driver distractedness detection model proposed in \cite{huangHCFHybridCNN2020} showed a frame rate of 24 FPS, both on a desktop PC. The above works are orthogonal to this work, which focuses on optimizing latencies in the system other than the video analysis latency by orchestrating consolidated mobile devices. Machine-learning libraries for mobile devices such as TensorFlow Lite~\cite{googleTensorFlowLiteML2021} enable us to use any compatible video analytics model in \name{}.

\section{\name{} Design}\label{sec:design}

\name{} system largely consists of two components, \textit{coordinator} and \textit{worker}. It assumes two types of edge devices, one primary, and zero or more external devices, depending on which component of \name{} is active in the device. The coordinator is active only in the primary device, while the worker is active in both primary and external devices. Hence, \name{} can work even when there is only one edge device in a vehicle as it will act as the primary device.

\begin{figure}[t]
    \centering
    \includegraphics[width=0.9\columnwidth]{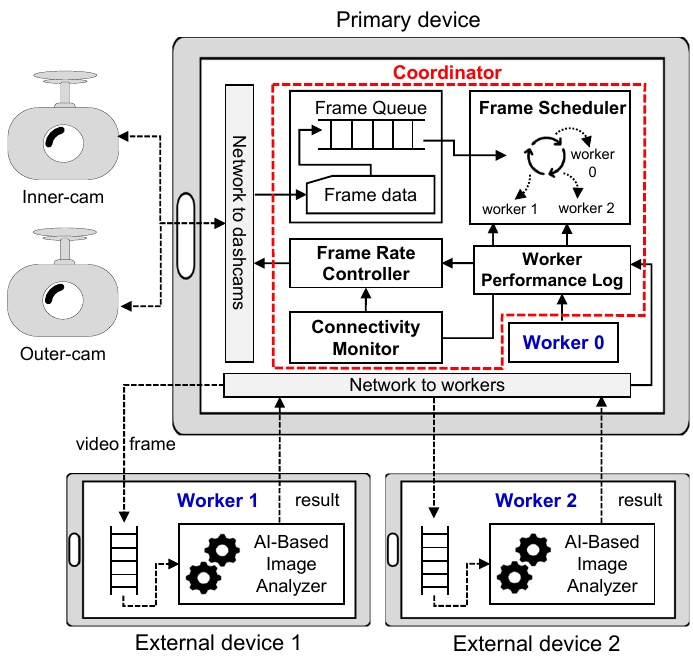}
    \caption{Overall architecture of \name{}.}
    \label{fig:new_overview_0612}
\end{figure}

Figure~\ref{fig:new_overview_0612} shows the overall architecture of \name{}. Dashcams, primary and external devices form an internal network using Wi\nobreakdash-Fi Direct~\cite{WiFi-direct} or mobile AP (hotspot). The coordinator in the primary device is responsible for constructing and maintaining the system. It also coordinates the overall video analysis process among workers in all devices. Workers in both the primary and external devices analyze video frames and send the results to the coordinator. We call the worker running in the primary device as worker 0, and other workers in external devices as worker $n \;(n > 0)$.

Video frames captured by dashcams are transferred to the coordinator and added to the frame queue. The frame scheduler fetches video frames from the queue one at a time and transfers them to workers for analysis. To that end, it maintains the sequence of workers that will analyze upcoming frames and transfers each video frame to the workers as per the sequence. Determining the worker sequence that balances loads across heterogeneous workers is crucial to minimize the overall processing latency. A video frame transferred to a worker is enqueued to the worker's frame queue until the image analyzer processes it. The analysis result is sent back to the coordinator, along with the worker's measured performance index. The coordinator determines whether or not to generate an alarm signal based on the result, e.g., a pothole on the road has been detected, or the driver is using a phone. The measured performance index is appended to the worker performance log in the coordinator to be used by both the frame scheduler and frame rate controller. The frame scheduler uses the log to construct the worker sequence. The frame rate controller estimates overall system capacity by analyzing the log and monitoring device connectivity. It dynamically adjusts the frame rate to maintain workload within the estimated capacity. The modified frame rate is then sent to dashcams which instructs them to transfer video frames at this rate.

In this section, we detail the core techniques of \name{} which address technical challenges shown in Section~\ref{sec:introduction}. In brief, task pipelining among consolidated devices enables \name{} to achieve high frame processing throughput despite its heavy workloads (Section \ref{sec:pipeline}). The weighted frame scheduling enables \name{} to overcome the inter-device performance heterogeneity and inter-video workload heterogeneity (Section \ref{sec:scheduling}). The intra-device performance heterogeneity and unpredictable device connectivity issues are resolved by dynamic frame rate control (Section \ref{sec:frame-rate-control}).

\subsection{Pipelined Task Execution}\label{sec:pipeline}

In addition to being computationally intensive, real-time video analytics on consolidated edge devices involves a large amount of video data transfers between devices which are also time-consuming. \name{} identifies three time-consuming tasks in the entire video analytics process: 1) video frame transfer from dashcams to the coordinator, 2) video frame transfer from the coordinator to workers, and 3) image analysis by workers. Since those tasks have a temporal order, 1) $\rightarrow$ 2) $\rightarrow$ 3), their serial execution leaves other tasks idle while a task is being executed, decreasing the resource utilization and frame processing throughput in modern edge devices that equip multi-core CPUs. \name{} resolves this problem by pipelining the tasks, as shown in Figure~\ref{fig:pipelining}. By executing all the tasks for consecutive frames in parallel using multiple cores, \name{} can fully utilize the device resources and improve the overall frame processing throughput.

\begin{figure}[t]
    \centering
    \includegraphics[width=0.85\columnwidth]{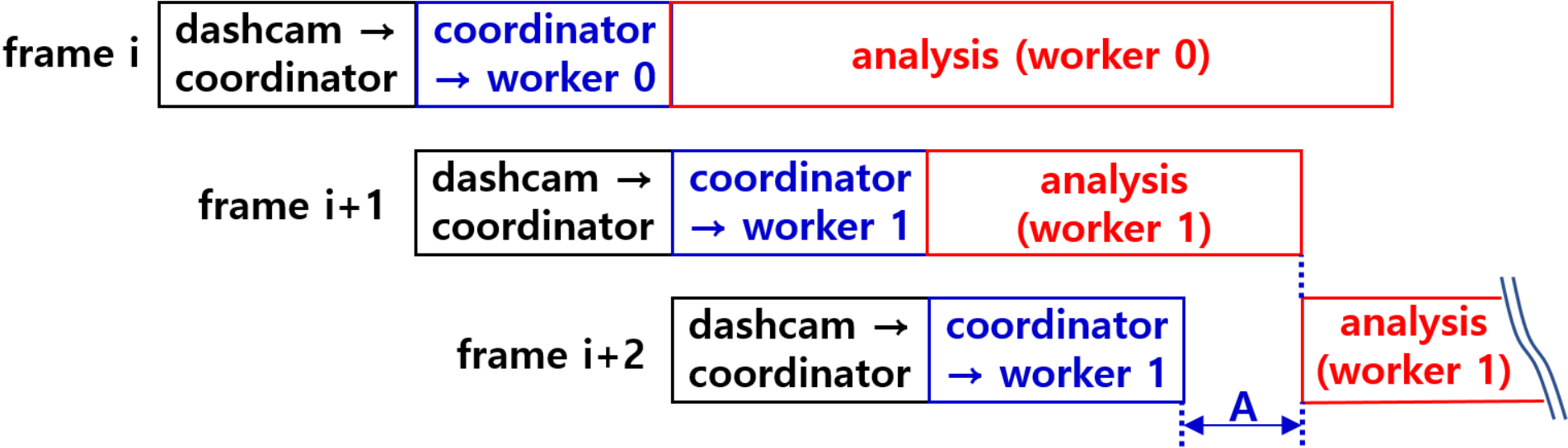}
    \caption{Task pipelining in \name{} assuming two devices. A: waiting time of the analysis task for frame $i+2$ until finishing the analysis of preceding frame $i+1$ in worker 1.}
    \label{fig:pipelining}
\end{figure}

The pipelining architecture of \name{} provides two meaningful guidelines to maximize the pipelining effect. First, a high level of concurrency is required in edge devices, especially in the primary device that acts as both the coordinator and worker. The primary device is involved in three kinds of data transfer tasks (dashcams$\rightarrow$coordinator and coordinator$\rightarrow$workers frame transfers, and results collection from workers), an image analysis task for worker 0, and various coordination tasks including frame scheduling and frame rate control, all of which can be executed concurrently. To optimize overall system performance, \name{} should ideally use the most powerful edge device available in the car as the primary device.
Second, a carefully designed frame scheduling policy that precisely considers different image analysis times in each device is indispensable to the coordinator. For example, in Figure~\ref{fig:pipelining}, if frame $i+2$ is assigned to worker 0 instead of worker 1, its processing can be delayed until finishing the analysis task for frame $i$ in worker 0. Since the delay is larger than that when the frame is assigned to worker 1 as the figure exemplifies, such frame assignment increases the overall processing latency. Hence, a simple round-robin scheduling policy is not adequate to \name{} which needs to consider inter- and intra-device performance heterogeneity among participating devices.

\subsection{Weighted Frame Scheduling}\label{sec:scheduling}

The image analysis task takes the largest time in the entire frame analysis pipeline of \name{}, and two factors determine the time; the workload complexity for image analysis and the available computing capacity in each device.

The workload complexity for image analysis is generally determined by the AI model used for the image analysis. In \name{}, for instance, the AI model analyzing video frames from the outer camera involves far more operations than the one processing frames from the inner cam. This disparity arises because the outer cam model detects multiple arbitrary objects in each frame to identify various driving hazards, whereas the inner cam model only detects some face parts (eyes and ears) and the hands to assess driver distractedness. The frame scheduler needs to distinguish between two video sources to overcome the inter-video workload heterogeneity. However, since all workers in \name{} process frames from both video sources, they have identical workload complexity.

The available computing capacity in each device is determined by the native computing power of the device and the amount of owner-directed or background jobs that the device is currently executing. The former depends on the device specifications leading to inter-device performance heterogeneity, while the latter leads to intra-device performance heterogeneity. Notably, the native computing power of a mobile device may even change over time depending on the device's temperature, which is another source of intra-device performance heterogeneity. For example, the thermal governors in Android drop CPU frequency when the device temperature exceeds the thermal limit~\cite{thermal}. The frame scheduler in the coordinator needs to carefully incorporate both performance heterogeneities to reduce the per-frame processing latency.

However, numerically estimating the available computational resources on each device at a specific moment is challenging. Instead, we adopt a black-box approach, focusing solely on monitoring the output: the \textit{analysis time} for a single video frame. Using the analysis times for the recent video frames in each worker, we can estimate the \textit{relative} computing capacity currently available among workers and distribute upcoming video frames in proportion to the relative capacity. Such relative performance-based frame distribution enables us to focus not on the complex estimation of the time-varying computing capacity in each device but on distributing video frames across workers with different computing capacities which \name{} numerically represents by \textit{worker weights}.

\begin{figure}[t]
    \centering
    \includegraphics[width=0.75\columnwidth]{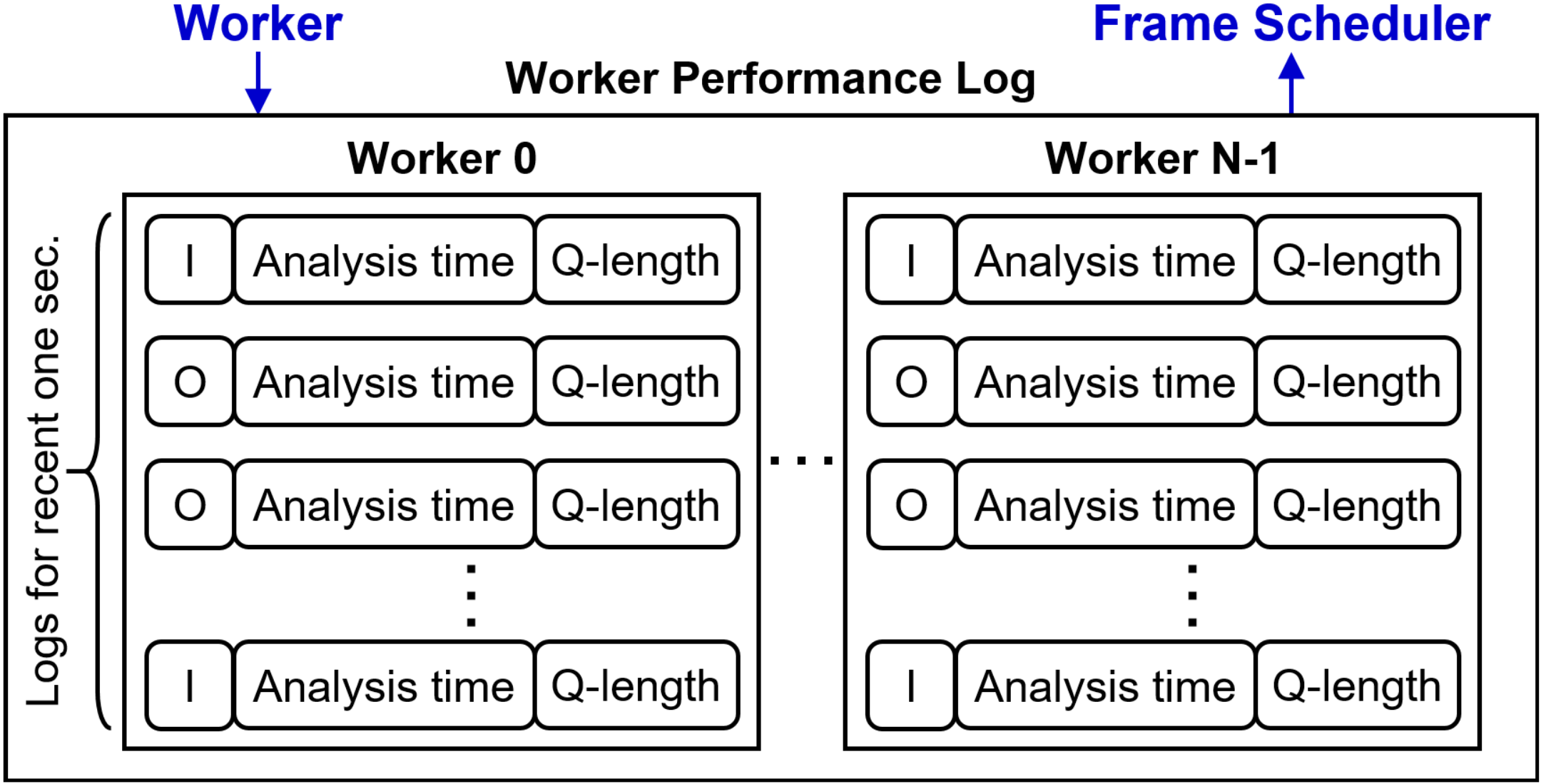}
    \caption{Worker performance log.}
    \label{fig:performance-log}
\end{figure}

To measure the time-varying worker weight, \name{} maintains the \textit{worker performance log} in the coordinator as shown in Figure~\ref{fig:performance-log}. The worker performance log contains the performance index records for each frame analysis task, grouped by the worker. Each record is a triple of (video source, analysis time, queue length), in which the video source is either inner (I) or outer (O) camera and the queue length denotes the number of waiting frames in the worker's frame queue after analyzing the frame. The analysis time is the consumed time by the image analyzer to process the frame. A record is generated by a worker and appended to the log by the coordinator upon receiving an image analysis result from a worker, and the coordinator maintains only recent records (within one second) for each worker to incorporate the recent computing capacities of workers to worker weights.

\subsubsection{Worker Weight Calculation}\label{sec:weight}

The weights of workers represent the currently available computing capacities in each worker and are calculated using the worker performance log. To numerically represent the computing capacity of a worker, we estimate the internal processing latency for an upcoming frame in the worker and then take its inverse to get the weight of the worker. The performance index records provide sufficient information to estimate the internal frame processing latency. Specifically, the queue length indicates the queueing delay until analyzing the upcoming frame and the analysis time represents the computing latency for analyzing the frame. The frame scheduler calculates average analysis times and queue lengths for each worker based on all recorded data. It differentiates between video sources when determining average analysis times.

The frame scheduler first estimates the computing latency for a single frame analysis in worker $i$, $T_C^i$, as $T_C^i=(T_I^i+T_O^i)/2$, where $T_I^i$ and $T_O^i$ denote the average analysis times for frames from inner and outer cameras, respectively. By averaging the \textit{average} frame analysis times from different video sources, $T_C^i$ overcomes the possible under- or over-estimation of the computing latency when the numbers of performance index records for each video source are not identical, provided that there is inter-video workload heterogeneity. For a newly joined device that has no performance index record in the worker performance log, default values for $T_I^i$ and $T_O^i$ are used. Then, the frame scheduler estimates the internal processing latency for an upcoming frame by multiplying $T_C^i$ by the average queue length of the worker, $L_Q^i$, plus the upcoming frame itself, i.e., ($L_Q^i$ + 1). Finally, by taking its inverse, we get the weight of worker $i$, $w_i$, as;
$$w_i = \frac{1}{T_C^i(L_Q^i+1)}$$

\subsubsection{Worker Sequence Generation}\label{sec:seq_gen}

\begin{figure}[t]
    \centering
    \includegraphics[width=0.7\columnwidth]{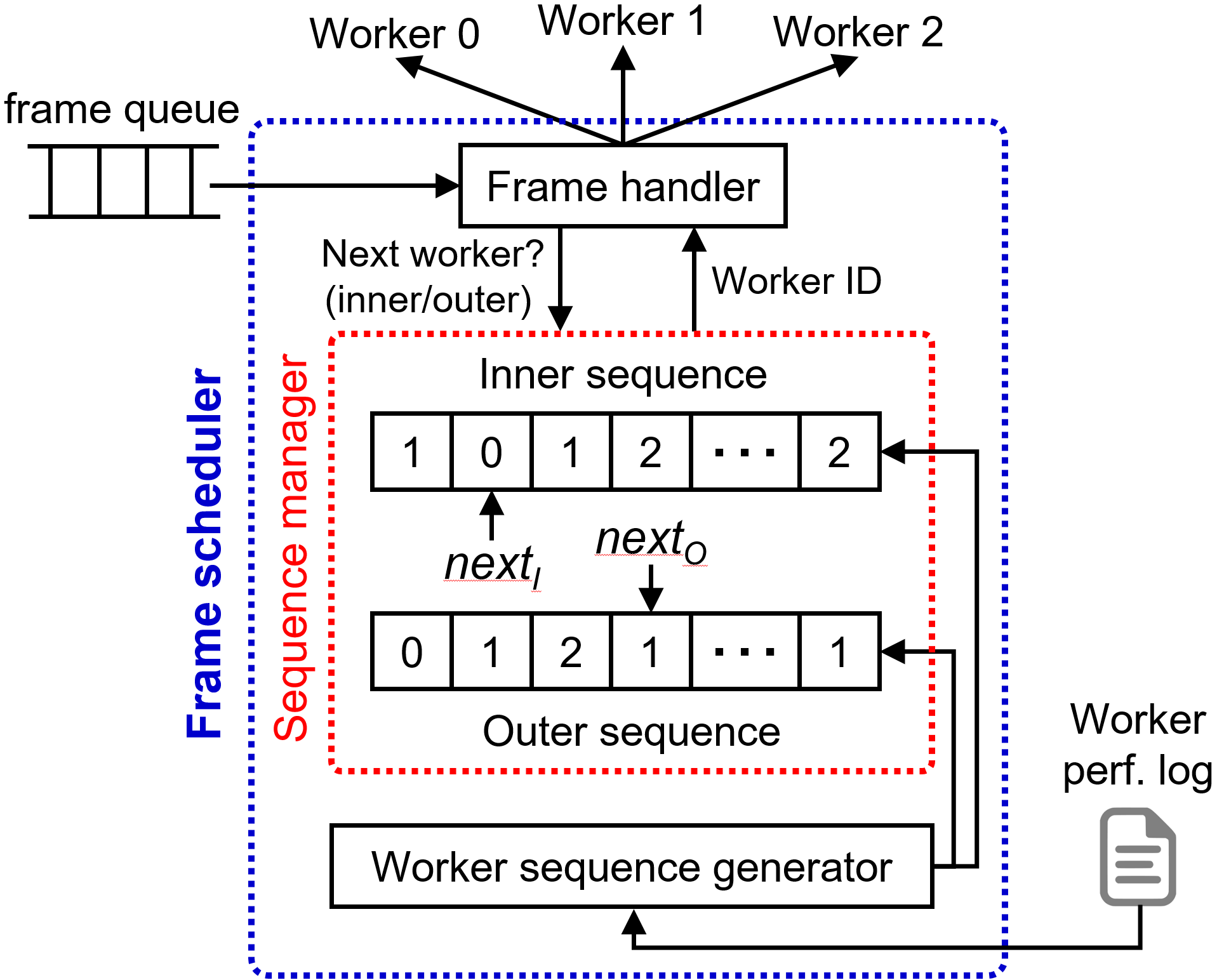}
    \caption{Worker sequence-based frame scheduling.}
    \label{fig:scheduler}
\end{figure}

The frame scheduler in \name{} distributes video frames to workers as per pre-determined fixed-sized worker sequences, instead of determining workers on a frame-by-frame basis, to reduce the scheduling overhead in the coordinator. As shown in Figure~\ref{fig:scheduler}, it maintains distinct worker sequences for each video source, inner and outer sequences, representing the order of workers that will process upcoming video frames from inner and outer dashcams, respectively. The separated worker sequences per video source prevent the biased assignment of frames from a particular source to a specific worker. Upon fetching a frame from the frame queue, the frame handler identifies its source and requests a worker ID that will process the frame to the sequence manager. The sequence manager keeps pointers ($next_I$ and $next_O$) to the next workers in each sequence. It replies with the worker ID indicated by the pointer and moves the pointer to the next worker ID. Then the frame handler transfers the frame to the designated worker.

The worker sequences are generated in two cases. First, when a worker pointer for a sequence reaches the end of the sequence, the sequence manager triggers the worker sequence generator to establish a new sequence. Second, when the coordinator detects any change (join or leave) in the devices' connection states, it immediately triggers the worker sequence generator to rebuild both sequences entirely, incorporating the changed device set. The length of the worker sequences is short enough to be consumed quickly so that sudden changes to the worker weights could be immediately applied through timely sequence generations. The core of the frame scheduler is its worker sequence generation algorithm.

A worker sequence is comprised of a fixed number of slots, each containing a worker ID. The worker sequence generation algorithm fills up the slots aiming to achieve its design goal; \textit{minimizing the overall frame processing latency}. The algorithm achieves the goal using two strategies.

\textbf{Load-balanced Frame Distribution.} The algorithm distributes upcoming frames to workers in proportion to workers' computing capacity, represented by worker weights, to avoid latency caused by overburdening a worker.

\textbf{Even Intra-worker Frame Distribution.} The algorithm keeps the gaps between the same worker in the sequence as even as possible. A burst transfer of frames to a worker instantaneously increases the length of the worker's frame queue, raising the processing latency of later enqueued frames. By evenly distributing the same worker in a sequence, we can provide the worker as much time as possible to process the preceding frame before transferring a subsequent frame, decreasing the overall queueing delay.

\begin{algorithm}[h]
\small
    \SetKwComment{Comment}{\scriptsize // }{}
    \SetCommentSty{scriptsize}
    \SetKwInput{N}{$N$}
    \SetKwInput{M}{$M$}
    \SetKwInput{pi}{$p_i$}
    \SetKwInput{wi}{$w_i$}
    \SetKwInput{Wall}{$W$}
    \N{the length of the worker sequence}
    \M{the number of workers}
    \pi{priority value of worker $i$ (initially, zero)}
    \wi{weight of worker $i$}
    \Wall{sum of worker weights for all workers ($=\sum_{i=0}^{M-1} w_i$)}
    \BlankLine
    \KwResult{a worker sequence ${\cal S}[N]$ (initially, empty)}
    \BlankLine
    \Begin{
        \For{$0 \leq pos \leq N-1$}{
            $p_{max} \gets 0$\Comment*[r]{$p_{max}$: highest priority value ever seen}
            \For{$0 \leq i \leq M-1$}{
                $p_i \gets p_i + w_i$\Comment*[r]{increase all priority values by weights}
                \If{$p_i > p_{max}$}{
                    $p_{max} \gets p_i$\;
                    $next \gets i$\Comment*[r]{$next$: worker with the highest priority}
                }
            }

            ${\cal S}[pos] \gets next$\Comment*[r]{append worker $next$ to ${\cal S}$}
            $p_{next} \gets p_{next}-W$\Comment*[r]{decrease worker $next$'s priority value}
        }
    }
    \caption{Worker sequence generation algorithm}
    \label{alg:algo_sequence}
\end{algorithm}

Algorithm~\ref{alg:algo_sequence} shows the procedure for creating a worker sequence among $M$ workers for the upcoming $N$ frames, which can be used for generating both the inner and outer sequences. While it seems, at a glance, like a weight-based round-robin scheduling algorithm, it generates a different sequence from that of the round-robin algorithm due to its proportional increase and fixed decrease policy for priority values. The priority values of workers increase by their respective weights ($w_i$) in every iteration of the outer loop, while they decrease by the sum of weights ($W$) when appended to the sequence, always maintaining the sum of priority values to be zero at the end of each iteration of the outer loop. This policy makes the number of slots each worker gets in a sequence to be proportional to the worker's weight, as shown in Theorem~\ref{theorem1}, achieving the load-balanced frame distribution.

\begin{theorem}\label{theorem1}
The worker sequence generation algorithm assigns slots in a sequence in proportion to the worker weight.
\end{theorem}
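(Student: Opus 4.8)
The plan is to track the priority values $p_i$ across iterations of the outer loop and compare the running slot count of each worker with the ideal proportional count $N w_i / W$. Write $p_i^{(t)}$ for the value of $p_i$ at the end of the $t$-th iteration of the outer loop (so $p_i^{(0)} = 0$), and let $c_i^{(t)}$ be the number of slots among the first $t$ that have been assigned to worker $i$. First I would formalize the sum-zero invariant already flagged in the text: within one outer iteration every $p_i$ is incremented by $w_i$ (total increment $\sum_i w_i = W$) and then exactly one worker's priority is decremented by $W$, so $\sum_i p_i$ is unchanged and, starting from $0$, equals $0$ at the end of every iteration. A straightforward accounting over the first $N$ iterations then yields the exact identity $p_i^{(N)} = N w_i - W\, c_i^{(N)}$, since worker $i$'s priority is raised by $w_i$ in each of the $N$ iterations and lowered by $W$ exactly $c_i^{(N)}$ times. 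Rearranging gives $c_i^{(N)} = \frac{N w_i}{W} - \frac{p_i^{(N)}}{W}$, so everything reduces to bounding $p_i^{(N)}$ by a constant independent of $N$.

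For that bound I would show by induction on $t$ that $p_i^{(t)} > -W$ for every worker $i$. If worker $i$ is not selected in iteration $t$, then $p_i^{(t)} = p_i^{(t-1)} + w_i > -W$ because $w_i > 0$. If it is selected, its pre-decrement value is the maximum of the $M$ post-increment values $p_j^{(t-1)} + w_j$, whose sum is $\sum_j p_j^{(t-1)} + W = W$, hence this maximum is at least the average $W/M$; after subtracting $W$ we still have $p_i^{(t)} \ge W/M - W > -W$. Combining this lower bound with the sum-zero invariant then caps each priority from above: $p_i^{(t)} = -\sum_{j \ne i} p_j^{(t)} < (M-1)W$. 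Thus $-W < p_i^{(N)} < (M-1)W$ and therefore $\bigl|\, c_i^{(N)} - N w_i / W \,\bigr| < M-1$. This is precisely the claimed proportionality: each worker's slot count equals its weight-proportional share up to an additive error bounded independently of the sequence length (equivalently $c_i^{(N)}/N \to w_i/W$), and the individual errors $-p_i^{(N)}/W$ sum to zero, consistently with $\sum_i c_i^{(N)} = N = \sum_i N w_i / W$.

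The routine ingredients are the sum-zero invariant and the linear accounting; the one place that needs a genuine idea is the a priori bound on the priorities, and the key observation there is that a worker can only be selected when its incremented priority is at least the average $W/M$ of a collection summing to $W$, which keeps every priority strictly above $-W$ for all time, after which the sum-zero constraint supplies the upper bound. I would also remark that ties in the comparison $p_i > p_{max}$ are broken toward the lowest index, but this has no effect on any step above since the selected worker still carries an at-least-average priority; and the degenerate case $M = 1$ is immediate, as worker $0$ then takes every slot and $W = w_0$.
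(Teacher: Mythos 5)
Your proposal is correct and follows the same route as the paper's proof: the exact accounting identity $p_i = N\cdot w_i - W\cdot s_i$, followed by an $N$-independent bound on the terminal priority values. The only substantive difference is in that bound: the paper simply asserts $-W < p_i \leq W$, whereas you actually prove the lower bound $p_i > -W$ (via the observation that the selected worker's post-increment priority is at least the average $W/M$ of $M$ values summing to $W$) and then deduce the upper bound $p_i < (M-1)W$ from the sum-zero invariant; your resulting constant is looser (additive error $M-1$ per worker rather than roughly $1$), but you supply a justification the paper omits, and either bound suffices for the proportionality claim.
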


\begin{proof}
Let $s_i$ be the number of slots assigned to worker $i$ upon finishing the algorithm. During the algorithm execution, the priority value of worker $i$ increases $N$ times each by $w_i$ and decreases $s_i$ times each by $W$. Hence, after finishing the algorithm, $p_i = N\cdot w_i - W\cdot s_i$. Redistributing terms in the equation, we get $$s_i=\frac{1}{W}(N\cdot w_i - p_i).$$ Since $-W < p_i \leq W$ and $s_i$ is an integer,
$$\lfloor N\cdot \frac{w_i}{W} -1\rfloor \leq s_i < \lceil N\cdot \frac{w_i}{W} +1\rceil .$$
Hence, $s_i$ is proportional to $w_i$.
\end{proof}

Another powerful effect of this algorithm is that it prevents the same worker from getting multiple slots in a row and then not getting a slot for a while, achieving even intra-worker frame distribution, owing to its intrinsic round-robin-based slot distribution policy. This can be easily seen by assuming workers have similar weights. In that case, workers repeatedly get a slot in their weight order, keeping an identical distance of $M$ between the same worker in the sequence.

\begin{table}[h]
\setlength{\tabcolsep}{4pt}
\setlength{\belowrulesep}{0ex}
\setlength{\aboverulesep}{0ex}
\centering
\small
\caption{An example of the worker sequence generation algorithm when $N = 9$, $M = 3$, $w_0 = 2$, $w_1 = 3$, $w_2 = 4$.}
\begin{tabular}{@{}c|c|ccccccccc|c}
\toprule
&\multirow{2}{*}{$w_i$}&\multicolumn{9}{c|}{$p_i$ after each iteration (outer loop)}& \multirow{2}{*}{\# slots}\\ \cmidrule{3-11}
& & 1 & 2 & 3 & 4 & 5 & 6 & 7 & 8 & 9 &  \\ \midrule
worker 0 & 2 & 2 & 4 & \circled{-3} & -1 & 1 & 3 & \circled{-4} & -2 & 0 & 2  \\
worker 1 & 3 & 3 & \circled{-3} & 0 & 3 & \circled{-3} & 0 & 3 & \circled{-3} & 0 & 3 \\
worker 2 & 4 & \circled{-5} & -1 & 3 & \circled{-2} & 2 & \circled{-3} & 1 & 5 & \circledzero{0} & 4  \\ \midrule
\multicolumn{2}{c|}{Sequence} & 2 & 1 & 0 & 2 & 1 & 2 & 0 & 1 & 2 & 9     \\ \bottomrule
\end{tabular}
\label{tab:worker-sequence-example}
\end{table}

Table~\ref{tab:worker-sequence-example} shows an execution example of the algorithm. The weight values are assumed to be integers for simplicity because only their relative ratio matters, but in reality they can be any positive real value. In each iteration, worker $i$ gains $w_i$ of its priority value, and a worker with the highest priority value (worker $next$) is chosen to be appended to the worker sequence. Then, the priority value of the chosen worker is decreased by 9 (=$W$) and is marked as circled in the table.
We can see that the algorithm achieves both load-balanced frame distribution and even intra-worker frame distribution. With a larger $N$ than 9, the above worker sequence repeats, since priority values in the 9th iteration are identical to their initial values (i.e., all 0s).

The separate worker sequences for inner and outer videos enable \name{} to overcome the inter-video workload heterogeneity by evenly distributing inner and outer frames across workers. Without such distinction, a worker may receive many frames from one source that requires complex computation for analysis, resulting in higher overall latency, while another worker receives many frames from the other source that requires less computation, leaving a portion of its computing capacity unused. For example, assuming a unified worker sequence and two workers with similar computing capacity, the interleaved arrival of frames from two sources may result in an entirely separated distribution of frames per their sources.

\subsection{Dynamic Frame Rate Control}\label{sec:frame-rate-control}

Due to the unpredictable device connectivity and the intra-device performance heterogeneity, the overall computing capacity in \name{} varies over time.
With low computing capacity, \name{} cannot process video frames promptly, leading to their accumulation in frame queues. This increases the overall latency and results in outdated outputs, such as detecting a road hazard too late to avoid. In that case, we need to reduce the burden on workers by decreasing the frame rate in dashcams so that workers can process frames without excessive delays. On the other hand, with high computing capacity, we need to increase the frame rate to reduce the inter-frame time gap for faster detection of driving hazards or driver distractedness.

\name{} implements a dynamic frame rate controller that detects the changes in the overall computing capacity and updates the frame rate based on the queueing system model. As shown in Figure~\ref{fig:frame-rate-control}, it uses the performance index records in the worker performance log to estimate the changing computing capacity of each connected device. The connectivity monitor informs the frame rate controller of the join or leave event of a new or existing device. The updated frame rate ($r_{new}$) by the frame rate controller is informed to dashcams so that they transfer frames to the coordinator at the updated rate\footnote{We assume that dashcams have the functionalities to communicate with the coordinator and control their frame transfer rate for this purpose. Until such dashcams become available in the market, smartphones that install our dashcam emulation app can replace them.}. Notably, dashcams still maintain their native frame rate ($F_R$) for capturing and storing videos, to use them as unimpaired evidence for traffic accidents. They change only the frame \textit{transfer} rate by omitting a portion of frames in the recorded videos to transfer. Before detailing the behavior of the dynamic frame rate controller, it is noteworthy that the frame rate control aims to control the total amount of load on the entire system, while the individual load control on each device is the frame scheduler's role.

\begin{figure}[t]
    \centering
    \includegraphics[width=0.75\columnwidth]{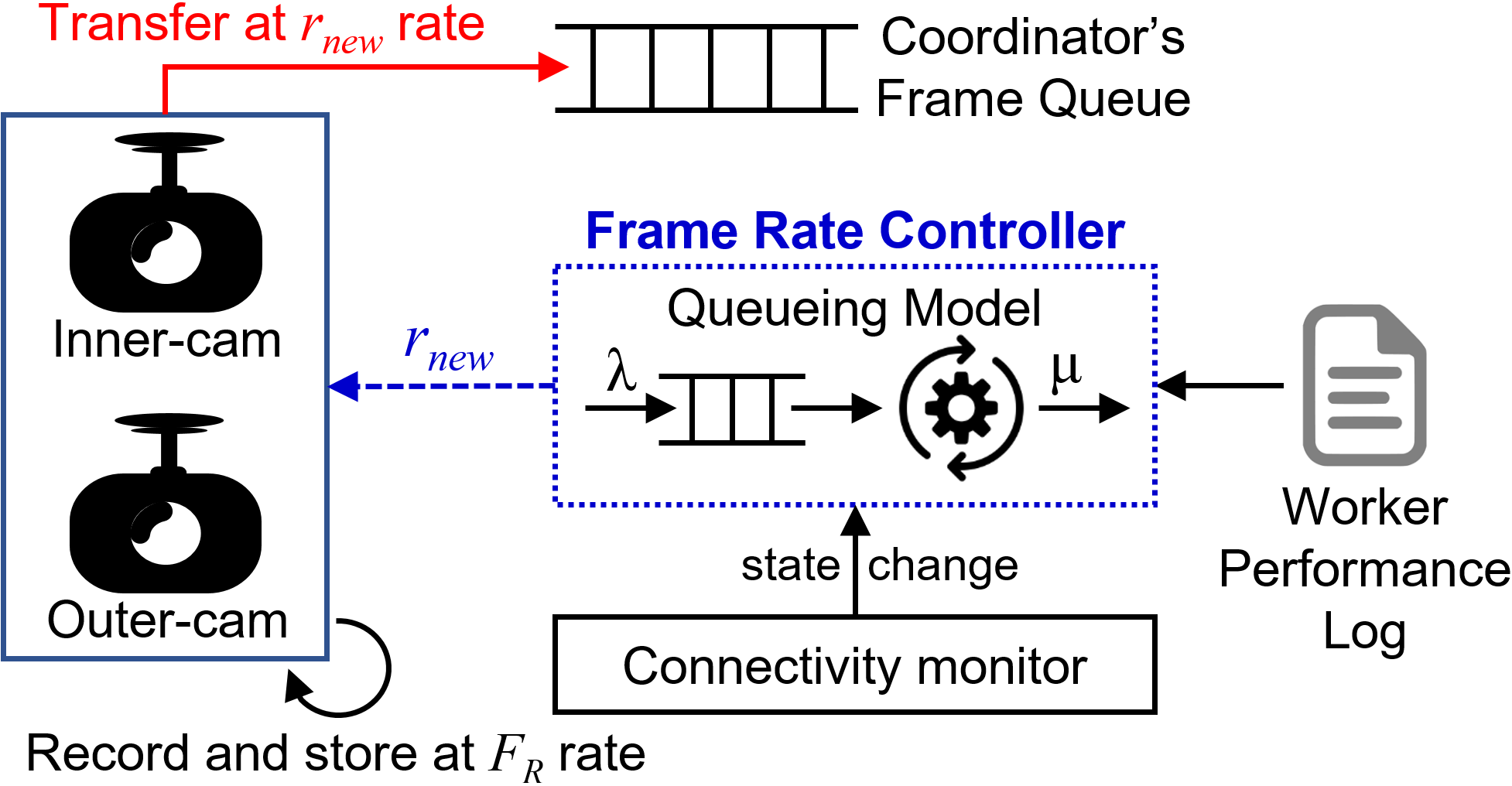}
    \caption{Dynamic frame rate control.}
    \label{fig:frame-rate-control}
\end{figure}

\name{} periodically adjusts the frame rate under the changing computing capacity by invoking the frame rate controller. The controller estimates the average frame processing time (= queueing delay + image analysis time) in a worker and uses a latency deadline ($L_D$) which is a configurable parameter representing the upper bound of the frame processing latency for detecting driving hazards or driver distractedness in a frame. To estimate the average frame processing time, it employs the $M$/$M$/1 queueing system model~\cite{queueing} assuming that frame inter-arrival times and image analysis times in a worker follow exponential distributions. The average frame processing time in the model is $1/(\mu - \lambda)$, where $\mu$ and $\lambda$ denote the average frame analysis throughput and frame arrival rate, respectively.

Given the average image analysis time in worker $i$, $T_C^i$, the frame analysis throughput in worker $i$ is $\mu_i=d_p/T_C^i$, where $d_p$ is the degree of parallelism representing the number of frames concurrently processed in a multicore system. In a multicore system, as in modern smartphones, we can benefit from parallel frame analysis by using more than one frame analysis thread. The TensorFlow Lite inference engine~\cite{googleTensorFlowLiteTask} used in \name{} enables users to manually set the number of concurrent threads for its inference engine. Hence, in an eight-core device, for example, we can configure TensorFlow to process two frames in parallel ($d_p = 2$), each using four cores\footnote{Although $M$/$M$/$m$ queueing system model can be more accurate for multicore systems, we use $M$/$M$/1 model (with the throughput multiplied by the degree of parallelism) to exploit its computational simplicity.}. Then we can model the average frame processing time in worker $i$ as
$$T_{avg}^i=\frac{1}{d_p/T_C^i - \lambda_i},$$
where $\lambda_i$ denotes the average frame arrival rate to worker $i$. The frame rate controller finds $\lambda_i$ that satisfies
$$T_{avg}^i + 2T_F + T_R\leq L_D ,$$ where $T_F$ and $T_R$ denote the average frame and result transfer times, respectively, each is modeled as a constant value by dividing the average frame size or result size by the network bandwidth. We double the frame transfer time to incorporate two frame transfers from a dashcam to the coordinator and from the coordinator to a worker. Solving the inequality, we get the upper bound of the frame arrival rate for worker $i$ that satisfies the latency deadline as;
$$\lambda_i \leq \frac{d_p}{T_C^i} - \frac{1}{L_D - 2T_F - T_R}$$
Finally, the frame rate controller determines the new frame rate by summing up the upper bounds for all $M$ workers;
$$r_{new} = \min\left(\sum_{i=1}^M ( \frac{d_p}{T_C^i} - \frac{1}{L_D - 2T_F - T_R}), F_R\right)$$

In the above equation, all other parameters than $T_C^i$ and $M$ are constants. $T_C^i$, the average image analysis time in worker $i$, changes depending on the varying computing capacity of the worker. When a user launches a new app or stops a running app on a device, the computing capacity change accompanied by the action is reflected to $T_C^i$. Note that, differently from the definition of $T_C^i$ in Section~\ref{sec:weight}, we use $T_C^i$ in the frame rate controller as the larger average analysis time between inner and outer frames, i.e., $T_C^i = \max(T_I^i, T_O^i)$, instead of averaging them, so that we can satisfy the latency deadline for both inner and outer frames.

The number of workers ($M$) changes when the device connectivity state changes. Upon receiving a device join event from the connectivity monitor, the frame rate controller uses a pessimistic approach to estimate the computing capacity (represented by $T_C^i$) of the new device whose performance log records are missing. Specifically, it uses the largest $T_C^i$ value from the other devices, aiming to underestimate the device capacity for the sake of safety. More accurate computing capacity will be estimated shortly as the performance index records of the new device are accumulated. When a device-leave event occurs, the computing capacity of the leaving device is naturally excluded from the system capacity by the above formula with a decreased $M$.

\section{Performance Evaluation}\label{sec:evalutation}

In this section, we first evaluate the performance of \name{} in terms of frame processing throughput and latency under various realistic scenarios. Then we investigate the internal behaviors of \name{} in detail.

\subsection{Application Implementation}
We have implemented \name{} as a single Android app\footnote{The source code of \name{} will be publicly available upon acceptance.},
rather than individually implementing the coordinator and worker apps to avoid costly inter-process communication between them in the primary device.
Users can configure their devices to act as either primary or external devices. If a device is configured to act as the primary device, both the coordinator and worker functionalities are enabled while only the worker is enabled in an external device.

\name{} requires an immediate frame transfer to the coordinator upon frame capture in a dashcam to minimize the overall analysis latency. However, commercial dashcams generate a video \textit{file} employing inter-frame compression, which prohibits instant frame transfer. To resolve the problem, we implemented our dashcam emulation app which generates individual frame images (bitmap format) instead of a video file. Each bitmap image is then independently compressed and transferred to the coordinator. Furthermore, since data serialization is a prerequisite for communication between Android devices, we used JPEG compression which generates output in a serializable form. Android APIs support JPEG compression of bitmap objects into byte arrays and vice versa. Decompressed bitmap images in the worker are fed into the image analyzer (TensorFlow Lite) for analysis.

However, due to JPEG's lossy compression algorithm, the accuracy of image analysis results may decrease.
As a result, JPEG compression introduces a trade-off between the frame transfer latency and the analysis accuracy, necessitating an acceptable compromise between them. To that end, we measured the average accuracy and compressed image size for various JPEG quality levels for all images in the video file we used in our experiments.

\begin{table}[h]
\centering
\scriptsize
\setlength{\tabcolsep}{6.5pt}
\caption{JPEG compression ratio versus analysis accuracy.}\label{tab:tradeoff}
\begin{tabular}{@{}llrrrrrr@{}}
\toprule
Quality level&&100&90&80&70&60&50\\ \midrule
\multirow{2}{*}{Compressed size}&Outer&0.177&0.061&0.042&0.033&0.028&0.024\\
&Inner&0.157&0.055&0.037&0.029&0.024&0.020\\
\midrule
\multirow{2}{*}{Accuracy (\%)}&Outer&89.0&88.0&87.0&86.0&85.6&85.0\\
&Inner&98.8&98.1&98.0&97.4&96.8&96.9\\
\bottomrule
\end{tabular}
\end{table}

Table~\ref{tab:tradeoff} shows the results. The compressed image size represents the normalized size of the compressed image relative to the original image size, and the accuracy is also normalized so that the accuracy with the original image is 100\%. Lowering the JPEG quality level from 100 to 80 greatly decreases image size without causing a noticeable decline in accuracy. Hence, we set the quality level to 80 in all our experiments.

\subsection{Experimental Environment}

We conducted experiments in an indoor environment, instead of the real driving environment, to provide identical driving situations to every experimental configuration. To that end, we used smartphones with our dashcam emulation app installed. The app implements the required functionalities for \name{}, including communication protocol with the coordinator, image compression and transfer, and frame rate control. It accepts video data either from a device's camera via the Android API module \texttt{android.hardware.camera2}, or from a video file specified by the user. The input data for our experiments are outer and inner dashcam video files, shown in Table~\ref{tab:video_list}, recorded for research purposes.

\setlength{\tabcolsep}{4pt} 
\begin{table}[h]
\centering
\scriptsize
\caption{Video files used in experiments.}
        \begin{tabular}{@{}llccc@{}}
        \toprule
        Source &Dataset (file name)&Resolution&FPS&Length\\
        \midrule
        Inner cam & DMD driver monitoring dataset\cite{DMD-dataset} (file A\tablefootnote{File A: ``gA\_1\_s3\_2019-03-14T14;31;08+01;00\_rgb\_face.mp4". We have transcoded this video into H.264/MPEG-4 AVC compression format that our dashcam emulation app can decode, using \texttt{x264}~\cite{x264} library.})&1280x720&30&61 sec.\\
        Outer cam & BDD100K driving video dataset\cite{bdd100k} (file B\tablefootnote{File B: ``b1c66a42-6f7d68ca.mov"})&1280x720&30&40 sec.\\
        \bottomrule
        \end{tabular}
    \label{tab:video_list}
\end{table}

To exclude the I/O delay in reading video files, which would not happen in a real dashcam, the dashcam emulation app loads the entire video file into memory before starting an experiment. When the video length is shorter than the experiment time, the app loops over the video frames stored in memory.
It controls the frame (transfer) rate by dropping a portion of frames from the input video files. The average frame sizes, after JPEG compression, of inner and outer videos are 101 KB and 116 KB, respectively.

\setlength{\tabcolsep}{4pt} 
\begin{table}[h]
\centering
\caption{List of devices used in experiments.}
\scriptsize
        \begin{tabular}{@{}lccll@{}}
        \toprule
        Device & Category & \# devices & CPU cores & Role \\
        \midrule
        Find X2 & Strong & 1 & 8 (1x2.8, 3x2.4, 4x1.8 GHz) & Primary/External \\
        Oneplus 8 & Strong & 1 & 8 (1x2.8, 3x2.4, 4x1.8 GHz) & Primary/External \\
        Galaxy S22 & Strong & 1 & 8 (1x3.0, 3x2.5, 4x1.8 GHz) & Primary/External \\
        Pixel 5 & Weak & 3 & 8 (1x2.4, 1x2.2, 6x1.8 GHz) & Primary/External \\
        Pixel 6 & - & 2 & 8 (2x2.8, 2x2.3, 4x1.8 GHz) & Dashcams\\
        \bottomrule
        \end{tabular}
    \label{tab:device_list}
\end{table}

Table~\ref{tab:device_list} lists the devices used in our experiments. We used six smartphones to run \name{} app either as a primary device or external devices and two to emulate inner and outer dashcams. All the devices communicate with each other through the hotspot-enabled primary device using TCP sockets. We categorize devices running \name{} app into strong and weak devices and set up various experimental environments with diverse computing capacities by combining up to three devices in each category as primary and external devices.

We used the TensorFlow Lite task library~\cite{googleTensorFlowLiteTask} with the EfficientDet-lite1~\cite{tensorflowEfficientDetLite42021} object detection model for the outer video analysis. Using them, \name{} either detects potential road hazards or identifies if the driver is tailgating. Non-vehicle objects detected on the road are identified as potential hazards, and vehicle objects that are large enough to indicate that they are very close to the host vehicle are identified for potential tailgating. For the inner video analysis, we used the TensorFlow Lite support library~\cite{thetensorflowauthorsTensorFlowLiteSupport2021} with the MoveNet Lightning~\cite{tensorflowMoveNetLightning2022} pose estimation model to identify the driver's distractedness by examining the driver's hands and eyes.

\begin{table}[h]
\centering
\scriptsize
\setlength{\tabcolsep}{25pt} 
\caption{Parameters and their values.}
        \begin{tabular}{@{}ll@{}}
        \toprule
        Parameter         & Value\\
        \midrule
        Worker sequence length (both inner and outer sequences) & 10\\
        Frame rate control period & 500 $ms$\\
        Network bandwidth (Wi-Fi 802.11ac, 5 GHz) & 100 Mbps\\
        Latency deadline ($L_D$) & 200 $ms$\\
        \bottomrule
        \end{tabular}
    \label{tab:parameters}
\end{table}

Table~\ref{tab:parameters} lists the configurable parameters in \name{} and their values used in all experiments.
Among them, the latency deadline is remarkable. Due to the absence of an objective or standard specification for the deadline of detecting driving hazards and driver distractedness, we provide it as a configurable parameter instead of hard-coding the value in the software. From the user's viewpoint, it can be conceived as a ‘driver guideline’ rather than a ‘deadline’, since it may vary depending on the driving speed. The value can be larger with a low speed because a driver has more time to cope with an event. On the other hand, with a high speed, it should be smaller to provide more time for a driver to handle the event.

\subsection{Performance in the Stable Environments}\label{sec:stable-perf}
We first evaluate the throughput and latency of frame analysis for 30 minutes in a stable environment where the device connection status does not change and device owners do nothing but allow their devices to be fully used for \name{}.

\begin{figure}[t]
     \centering
     \subfigure[Strong device environment with one to three devices.]{
        \includegraphics[width=0.8\columnwidth]{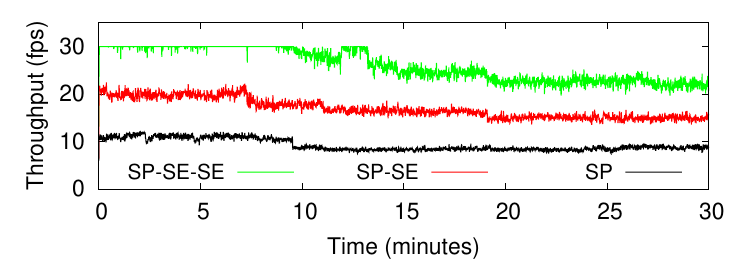}
        \label{fig:throughput-strong}
     }\\
     \subfigure[Weak device environment with one to three devices.]{
        \includegraphics[width=0.8\columnwidth]{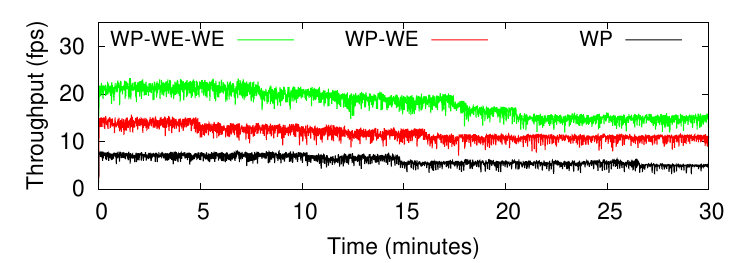}
        \label{fig:throughput-weak}
     }\\
     \subfigure[Heterogeneous device environment with three devices.]{
         \includegraphics[width=0.8\columnwidth]{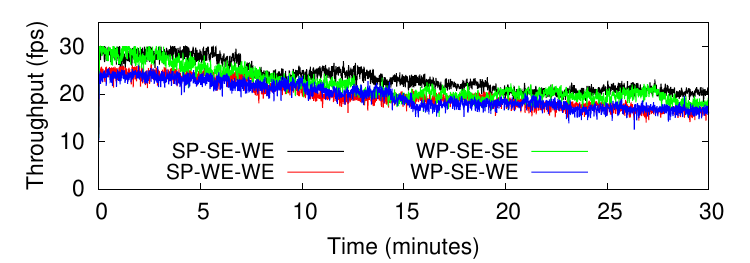}
         \label{fig:throughput-mixed}
     }
     \caption{Frame analysis throughput (frame rate per dashcam).}
     \label{fig:throughput}
\end{figure}

Figure~\ref{fig:throughput} shows the throughput (frame rate per camera) of \name{} under various device environments. SP, WP, SE, and WE denote strong primary, weak primary, strong external, and weak external devices, respectively. \name{} scales well with the number of participating devices, as shown in strong (Figure~\ref{fig:throughput-strong}) and weak (Figure~\ref{fig:throughput-weak}) device environments. As expected, \name{} achieves higher throughput in a strong device environment. The throughput with three heterogeneous devices (Figure~\ref{fig:throughput-mixed}) positions between that with three strong devices (SP-SE-SE) and three weak devices (WP-WE-WE), as easily anticipated. Also, under the heterogeneous device environment, DEVA achieves higher throughput as more strong devices are used. Specifically, the throughput with two strong and one weak devices (SP-SE-WE and WP-SE-SE) is generally higher than that with one strong and two weak devices (SP-WE-WE and WP-SE-WE). The above observations confirm the general scalability of DEVA; it achieves proportional throughput to the aggregated computing power of the participating devices.

One noticeable behavior of \name{} is its gradual throughput degradation as time passes. This is due to the behavior of the thermal governors in Android. As the experiment progresses, the devices' temperature gradually increases and the thermal governor drops CPU frequency accordingly. The default step-wise governor throttles the CPU frequency one step at a time, and the frame rate controller of \name{} adjusts the frame transfer rate of dashcams to the changing CPU frequency, resulting in gradual throughput degradation.

The frame rate is important for the timely detection of driving hazards or driver distractedness in dashcam video analytics since the inter-frame time gap inevitably delays the event/object detection. For example, with a 10 FPS frame rate, up to 100 ms of unavoidable delay occurs in addition to the frame processing latency by \name{}, while only up to 33 ms of delay occurs with a 30 FPS frame rate. As edge devices get stronger, such delay becomes more considerable. The scalability of \name{} enables us to decrease the unavoidable delay and enhance driving safety.

\begin{figure}[t]
     \centering
     \includegraphics[width=0.8\columnwidth]{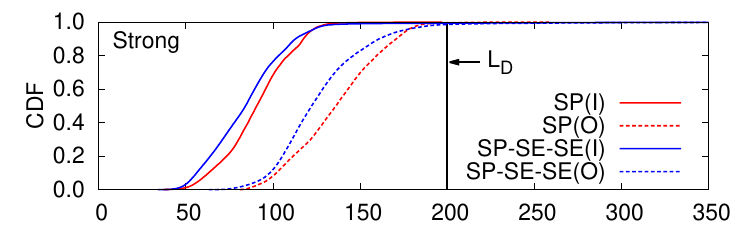}
     \includegraphics[width=0.8\columnwidth]{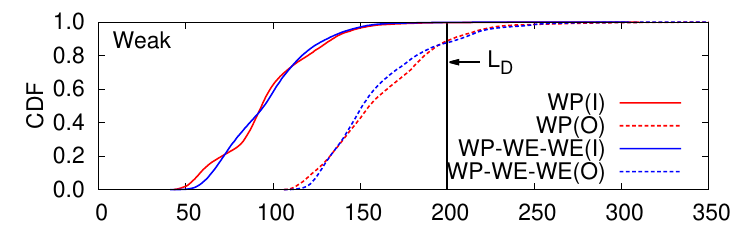}
     \includegraphics[width=0.8\columnwidth]{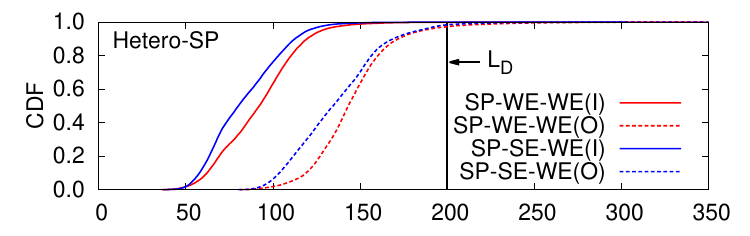}
     \includegraphics[width=0.8\columnwidth]{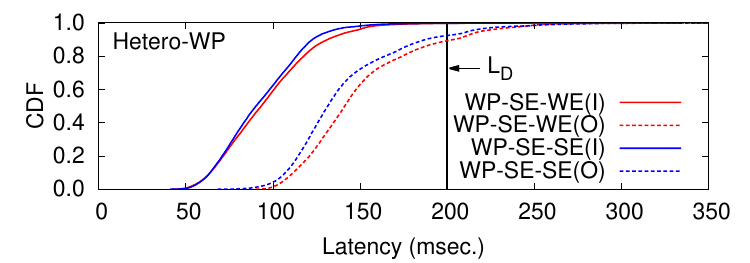}
     \caption{Frame processing latency distribution.}
     \label{fig:latency}
\end{figure}

Figure~\ref{fig:latency} illustrates the cumulative distribution function (CDF) of the frame processing time, which can be interpreted as the proportion of the frame processing latency that does not exceed the value in the $x$-axis. The frame processing time is the elapsed time from when a dashcam starts to transfer a generated frame to the coordinator until its analysis result arrives at the coordinator. The process begins in a dashcam but ends in the primary device, making it difficult to measure the time exactly. Instead of measuring the time as a whole, we divide the time into the frame transfer time from a dashcam to the coordinator and the elapsed time in the coordinator from transferring a frame to a worker to receiving its analysis result from the worker. We measured the latter time in the coordinator and then added the average frame transfer time (outer: 9.28 ms, inner: 8.08 ms) to incorporate the former time.

In all cases, we can see the large latency difference between inner and outer frames, which shows the inter-video workload heterogeneity due to the noticeably different computational complexity of analysis models used for inner and outer videos. Specifically, the average image analysis times for inner and outer frames were 43 ms and 110 ms, respectively, in a weak device, and 31.5 ms and 82.7 ms, respectively, in a strong one.

One may have a concern about the lower latency of the SP-SE-SE case than the SP case although the former incorporates frame transfers from the primary device to external devices which the latter does not. It is due to two reasons. First, in the primary device, frames assigned to itself are also delivered through socket communication to worker 0, so that \name{} does not need to differentiate worker 0 and other workers. Hence, the frames also pass through the network protocol stack bypassing only the physical communication. Considering the average frame transfer time, which includes the protocol stack latency, is less than 10 ms, the physical communication time takes only a marginal portion of the entire frame processing latency. Second, since the primary device acts both as a worker and the coordinator, the coordinating jobs that impose time-varying loads on the device frequently defer the image analysis tasks, increasing their overall processing time. And since such deferred frames occur only in the primary device, as we use more (external) devices in addition to the primary device, the total number of processed frames increases while the number of deferred frames remains the same. Hence, as we use more devices, the ratio of the deferred frames to the entire processed frames decreases, lowering the overall latency distribution. The reduced overall latency outweighs the physical communication time in the SP-SE-SE case, finally letting the SP-SE-SE case show a lower latency distribution than the SP case. Interestingly, however, such a result does not occur in the weak device environment. This is due to the largely varying image analysis times in the weak devices (Figure 8), which incurs large overall processing times for numerous frames. Due to such frames, the deferred frames in the primary device have less impact on the overall latency distribution, leading to similar distributions between WP and WP-WE-WE cases.

Although \name{} tries to satisfy the latency guideline by controlling the frame rate, it occasionally fails to do so because the frame rate controller uses the \textit{average} image analysis and frame transfer times when determining the frame rate. Noting that the real image analysis and frame transfer times can vary by a large margin, there can be frames whose processing times are far larger than the estimated value, finally missing the latency deadline, as shown in the weak and heterogeneous device environments in Figure~\ref{fig:latency}.

Figure~\ref{fig:analysis} confirms the argument by illustrating the image analysis times of every processed outer frame. The analysis time of each frame largely varies in weak devices, sometimes exceeding the latency deadline by itself, resulting in a larger deadline miss ratio than in strong devices. Since the average image analysis time in a weak device is also larger than that in a strong device, the overall frame processing latency increases as more weak devices are used (e.g., compare SP-WE-WE(O) with SP-SE-WE(O) in Figure~\ref{fig:latency}). Hence, more frames miss the deadline in the weak device environment than in the strong device environment.

\begin{figure}[t]
     \centering
     \hspace{-2mm}\includegraphics[width=0.48\columnwidth]{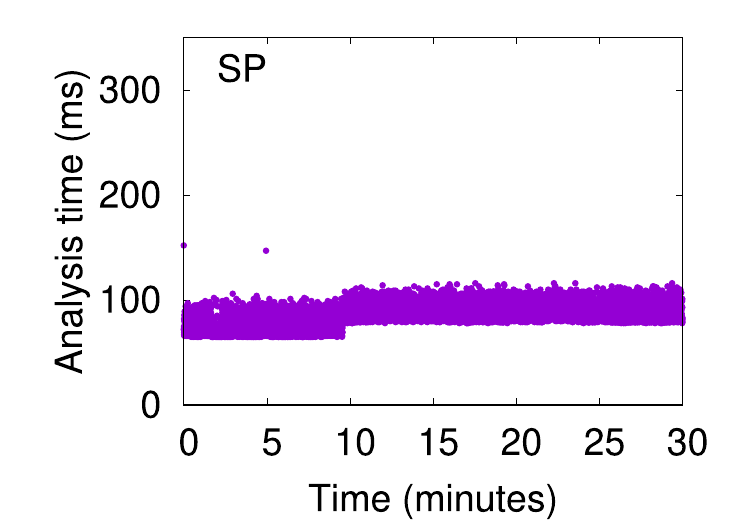}
     \hspace{-3mm}\includegraphics[width=0.48\columnwidth]{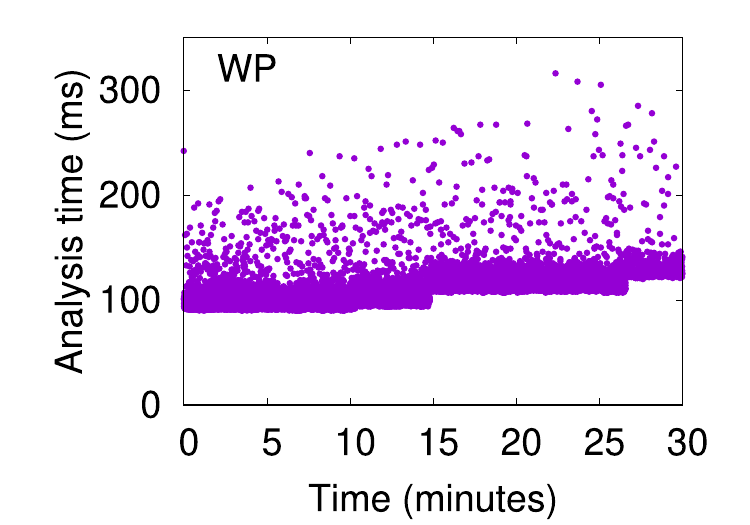}\\
     \hspace{-2mm}\includegraphics[width=0.48\columnwidth]{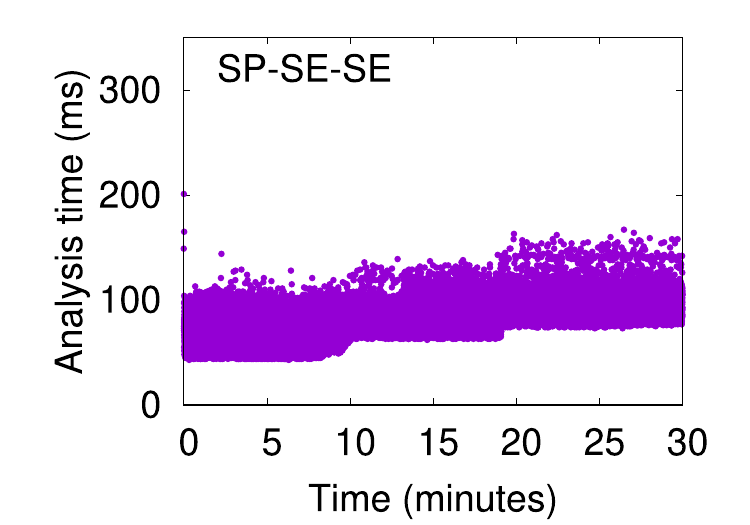}
     \hspace{-3mm}\includegraphics[width=0.48\columnwidth]{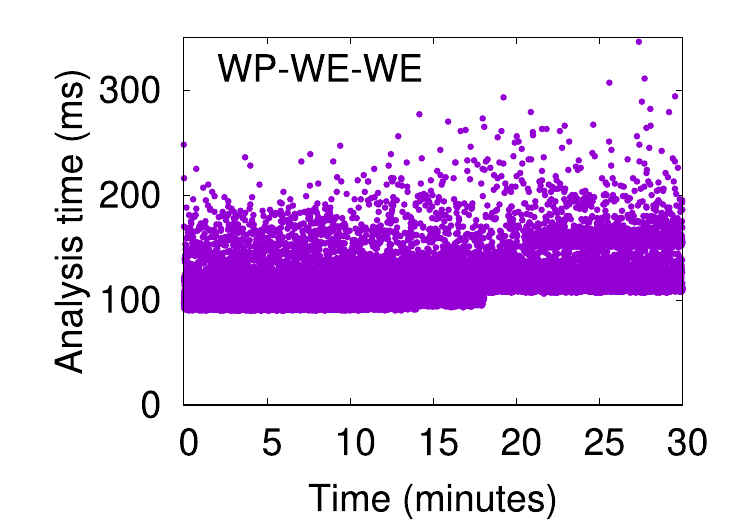}
     \caption{Image analysis times of the processed outer frames.}
     \label{fig:analysis}
\end{figure}

More importantly, the deadline miss ratio highly depends on the power of the \textit{primary} device; more frames miss the deadline when using a weak primary device. For example, comparing SP-WE-WE and WP-SE-WE cases that use the same device set (one strong and two weak devices) in the heterogeneous environment, the WP-SE-WE case shows far more deadline missed frames than the SP-WE-WE case while their throughput does not show a noticeable difference (Figure~\ref{fig:throughput-mixed}). The result confirms the importance of the primary device which takes both the coordinator and worker roles in \name{}. Due to the coordination tasks, the image analysis times in the primary device become large, making more frames miss the deadline. Therefore, among the available devices, it is necessary to use the most powerful device as the primary device.

Interestingly, the number of devices does not noticeably affect the deadline miss ratio, as shown in the strong and weak device environments in Figure~\ref{fig:latency}. This is due to the dynamic frame rate control scheme in \name{} which steadily maintains the in-worker frame queue lengths by adjusting the overall frame rate following the number of devices. The scheme also takes a critical role in preventing \name{} from abnormally stopping due to the frame buffer overflow. As shown in Figure~\ref{fig:throughput-strong}, less than three strong devices do not have sufficient aggregated computing power to analyze two full-frame videos. Without the scheme (i.e., dashcams transfer frames at a constant, 30 fps rate), the frame analysis tasks are gradually delayed due to insufficient computing power. The delayed frames are accumulated in the frame buffer, finally exceeding the buffer capacity and impairing the entire system. Hence, to analyze real-time videos using a small number of resource-constrained mobile devices as DEVA aims to, not only the frame scheduling but also the frame rate control schemes are indispensable.\footnote{This is why we could not compare \name{} with other schemes. To the best of our knowledge, DEVA is the first system that equips both frame scheduling and frame rate control functionalities for real-time distributed video analytics.}

To confirm the above argument, we experimented using an existing task scheduling algorithm for mobile devices, work-stealing \cite{fernandoMobileCrowdComputing2012}. The algorithm does not accompany a frame rate control scheme since it is not designed for video analytics applications. We implemented the algorithm into \name{} such that each worker receives two frames from the coordinator (i.e., \textit{steal} two tasks) whenever its frame queue is empty, disabling the frame rate control functionality.

\begin{figure}[t]
     \centering
     \includegraphics[width=0.8\columnwidth]{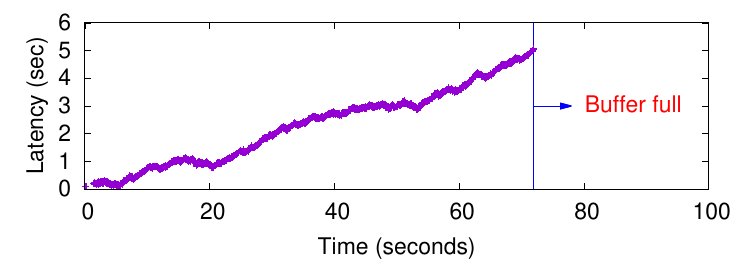}
     \caption{Latency variation with the work-stealing task scheduling algorithm.}
     \label{fig:steal}
\end{figure}

Figure~\ref{fig:steal} shows the outer frame processing latency variation as time passes using two strong devices (SP-SE). From the early stage of the experiment, frames begin to be delayed in the coordinator's frame queue, showing unacceptable and increasing latency according to the growing queue length. At 72 seconds, the delayed frames filled up the entire frame buffer that has 300 frames of capacity, and the system stopped due to the buffer overflow. Such a phenomenon occurred in every combination of devices, although the entire system stopped at different times in each case. The result clearly confirms the necessity of the frame rate control scheme for distributed real-time video analytics across resource-constrained mobile devices.

\subsection{Performance in the Unstable Environments}\label{sec:unstable-perf}
In this section, we evaluate the throughput and latency of frame analysis for 30 minutes in unstable environments, where the device connection status or the amount of owner-directed jobs varies, to evaluate \name{} in more realistic scenarios.

\subsubsection{Performance with Device Join or Leave}\label{sec:jl}

We first examine how \name{} behaves when a new device joins or an existing device leaves the system while it is working. We focus on how quickly \name{} adjusts to the connection status changes by updating its frame rate accordingly, keeping low latency and high throughput of frame analysis. We assume two scenarios of the connection status changes; two-phase change and random change scenarios. In the two-phase change scenario, external devices continuously join the system (increase phase) and then continuously leave the system (decrease phase). In the random change scenario, external devices randomly join or leave the system. In both scenarios, a device connection status change occurs every one minute. We assume that the primary device, possibly the driver's, never leaves the system.

\begin{figure}[t]
     \centering
     \subfigure[Throughput variation with Two-phase Join/Leave.]{
        \includegraphics[width=0.8\columnwidth]{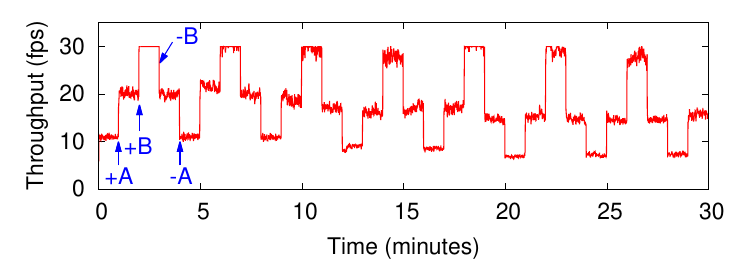}
        \label{fig:throughput-jl-2phase}
     }\\
     \subfigure[Throughput variation with Random Join/Leave.]{
        \includegraphics[width=0.8\columnwidth]{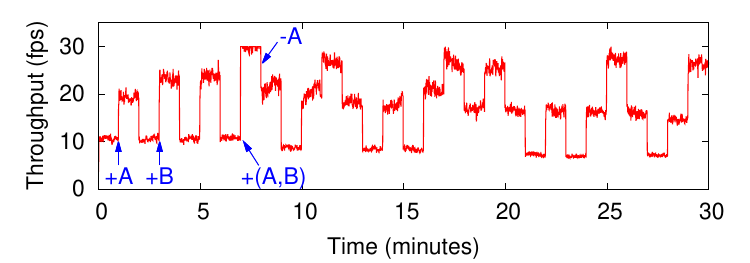}
        \label{fig:throughput-jl-random}
     }\\
     \subfigure[Latency distribution.]{
        \includegraphics[width=0.8\columnwidth]{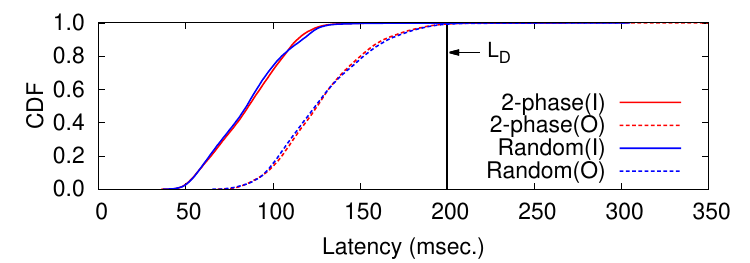}
        \label{fig:latency-jl}
     }
     \caption{Throughput and latency with device join/leave.}
     \label{fig:join-leave}
\end{figure}

Figure~\ref{fig:join-leave} shows the results in the strong device environment (SP-SE-SE), i.e., ever-connected SP and two dynamically joining or leaving SEs (i.e., A and B). In both scenarios, \name{} quickly adjusts to the varying device connection status owing to its per-device queueing model-based frame rate control. Since \name{} estimates the entire system capacity by summing up the individual device capacity, the capacity of a joining or leaving device can be immediately reflected in the whole system capacity upon recognizing such an event, resulting in the step-wise (not gradual) throughput variation.

In Figures~\ref{fig:throughput-jl-2phase} and \ref{fig:throughput-jl-random}, ``+A" and ``-A" denote the join and leave of device A, respectively. In both scenarios, the throughput increases more when device B joins than when device A joins since device B (Galaxy S22) is stronger than A (Find X2) as shown in Table~\ref{tab:device_list}. In the two-phase change scenario, the `+A, +B, -B, and -A' change pattern repeats, and the throughput shows homogeneous patterns accordingly except for the gradual degradation due to the thermal control of Android. In the random change scenario, two devices join simultaneously at time 7, achieving the maximum frame rate, and then device A leaves the system at time 8. We can observe the gradual throughput degradation also in this scenario. As such, \name{} promptly adjusts the frame rate, incorporating not only the device join/leave events but also the individual device capacity and thermal control.

Figure~\ref{fig:latency-jl} confirms that the worker frame queue lengths are maintained stable despite the connection status changes, and thus most frames meet the latency guideline as was in the same SP-SE-SE case without connection status changes (Figure~\ref{fig:latency}). This is one of the important characteristics of \name{}. When a device leaves the system, the worker queue lengths of the remaining devices may rapidly grow, if the frame rate controller does not immediately decrease the frame rate, resulting in increased frame processing latency and latency-missing frames. By rapidly adjusting the frame rate to the changing device connection status, \name{} could effectively maintain the worker queue lengths stable.

\subsubsection{Performance with User Interactions}

Now, we examine how \name{} behaves when external device owners interact with their devices such as launching a new app and finishing a running app. As in the previous experiments, we focus on how rapidly \name{} responds to the changing computing capacity due to user interactions by adjusting both the frame rate and frame distribution. For the ease of experiments, we emulated user interactions inside the \name{} app by starting or finishing a thread that executes some random image analysis tasks independently. We used the two-phase change scenario in Section~\ref{sec:jl}, replacing device join and leave with the start and stop of the thread, respectively.

\begin{figure}[t]
     \centering
     \subfigure[Throughput variation with user interactions.]{
        \includegraphics[width=0.8\columnwidth]{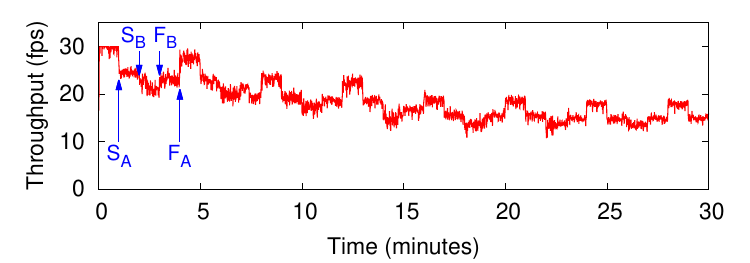}
        \label{fig:throughput-ui}
     }\\
     \subfigure[Latency distribution with user interactions.]{
        \includegraphics[width=0.8\columnwidth]{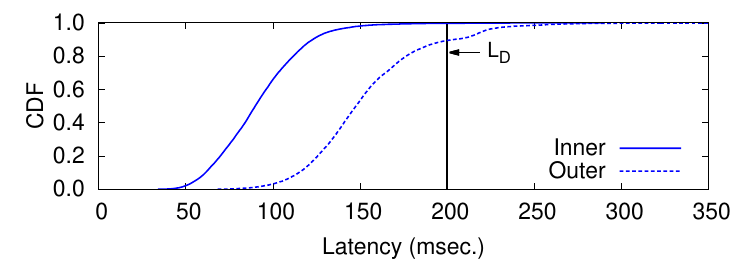}
        \label{fig:latency-ui}
     }
     \caption{Throughput and latency with user interactions.}
     \label{fig:ui}
\end{figure}

Figure~\ref{fig:ui} shows the throughput and latency of \name{} in the presence of user interactions. We used two strong and one weak devices (SP-SE-WE) where user interactions occur only in external devices assuming that the owner of the primary device (possibly, the driver) does not interact with the device. In Figure~\ref{fig:throughput-ui}, $S_A$ and $F_A$ denote the start and finish of the user interaction in device A. We used Galaxy S22 and Pixel 5 for external devices A (strong) and B (weak), respectively.

User interactions affect the image analysis time ($T_C^i$) by sharing the computing resources with \name{}. When a user-directed job starts running, the average image analysis time increases, and the frame rate controller in \name{} decreases the frame rate accordingly. When the job finishes, \name{} increases the frame rate again. Figure~\ref{fig:throughput-ui} shows such frame rate variations. We can see that \name{} rapidly reflects the changed available computing capacity to the frame rate. The increased image analysis time due to the user-directed job means the decreased device power to \name{}. In other words, it has the same effect as using weaker devices. Hence, the overall frame processing time increases and more frames miss the latency deadline. As shown in Figure~\ref{fig:latency-ui}, about 10\% of the outer frames miss the latency deadline, which is larger than the deadline miss ratio in the same SP-SE-WE case without user interaction (Figure~\ref{fig:latency}). Nonetheless, the latency rarely exceeds 250 ms.

\subsection{Detailed Analysis of \name{}}
In this section, we investigate the behavior of \name{} in detail to analyze its performance characteristics.

\subsubsection{Worker Queue Length Variation}
Given the large image analysis time, the frame queue lengths in workers greatly affect the overall frame processing latency. For example, considering the average image analysis time for outer frames in a weak device is 110 ms in our experiments, an incoming frame to the queue may miss the latency deadline even when only one outer frame exists in the queue. Hence, maintaining a small (almost zero) and stable queue length is paramount to meeting the latency deadline.

\begin{figure}[t]
        \centering
        \hspace{-2mm}\includegraphics[width=0.48\columnwidth]{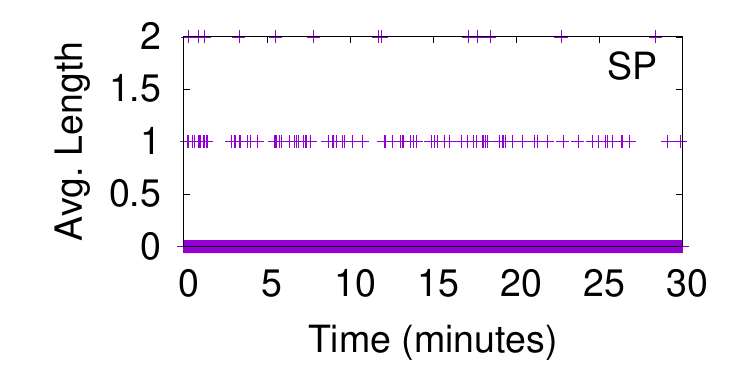}
        \hspace{-3mm}\includegraphics[width=0.48\columnwidth]{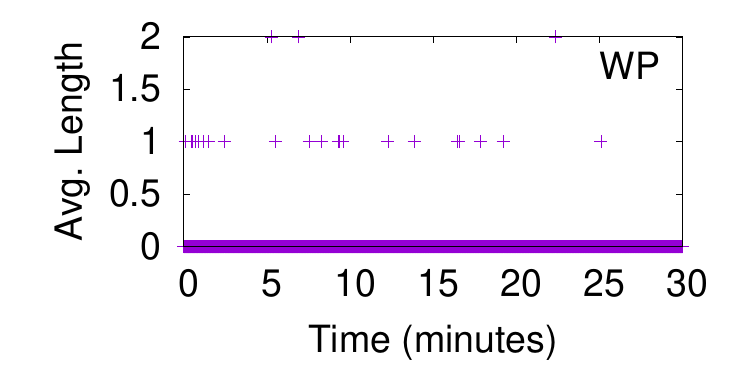}

        \hspace{-2mm}\includegraphics[width=0.48\columnwidth]{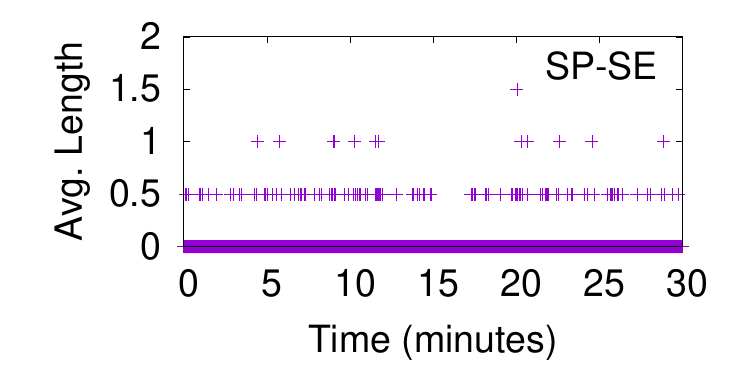}
        \hspace{-3mm}\includegraphics[width=0.48\columnwidth]{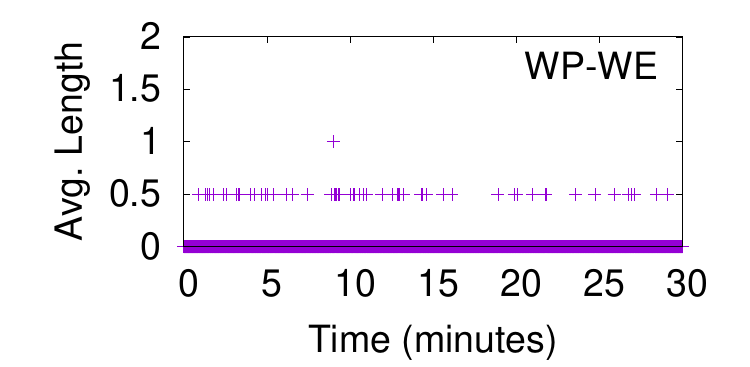}

        \hspace{-2mm}\includegraphics[width=0.48\columnwidth]{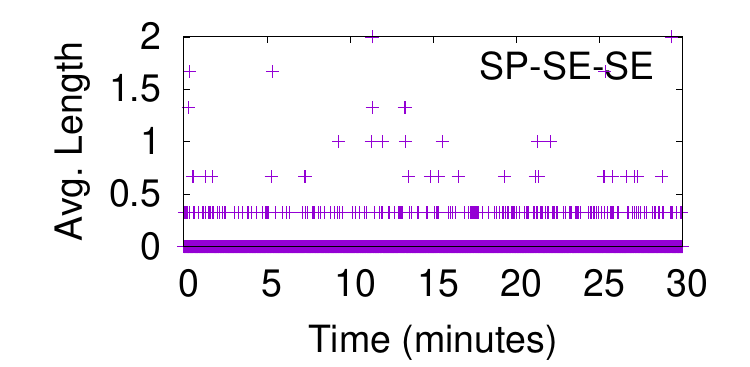}
        \hspace{-3mm}\includegraphics[width=0.48\columnwidth]{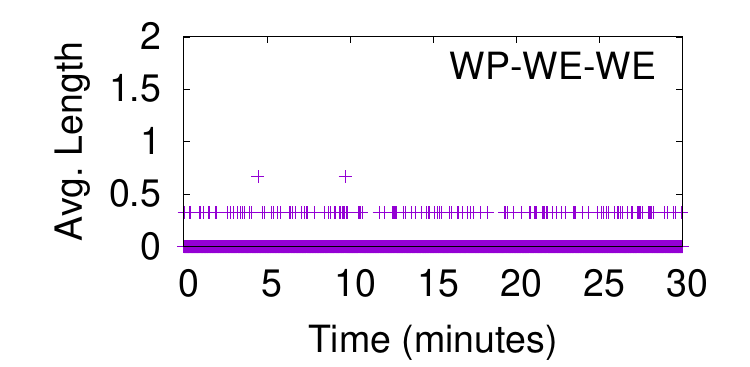}

        \hspace{-2mm}\includegraphics[width=0.48\columnwidth]{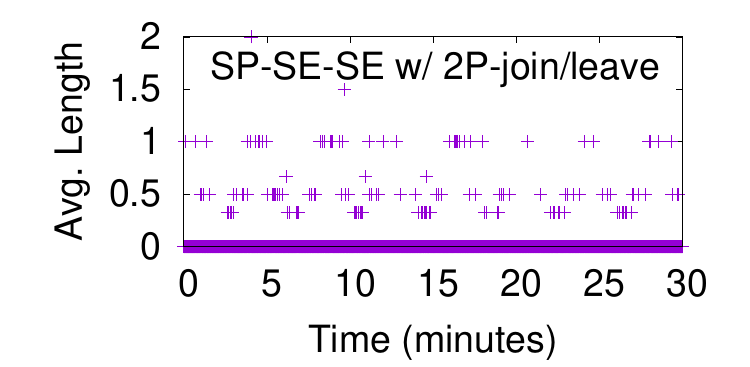}
        \hspace{-3mm}\includegraphics[width=0.48\columnwidth]{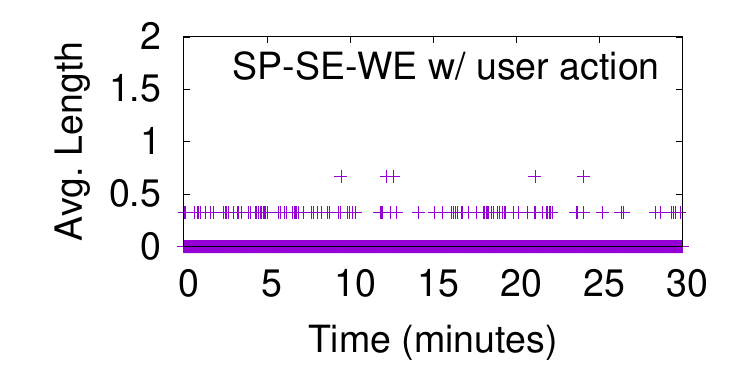}
     \caption{Average length of the worker frame queues.}
     \label{fig:queue}
\end{figure}

Figure~\ref{fig:queue} shows the average worker queue lengths in the stable (upper six figures) and unstable (lower two figures) environments. We can see that \name{} maintains the worker queue lengths mostly to zero in all cases, thus achieving a high deadline satisfaction ratio. Even in unstable environments, \name{} effectively suppresses the excessive queue length increase through the prompt frame rate adjustment.

One can be concerned about numerous nonzero points in the strong device environment (SP, SP-SE, and SP-SE-SE), which makes their near-zero deadline miss ratios shown in Figures~\ref{fig:latency} and \ref{fig:latency-jl} questionable. The answer is in the parallel analysis of multiple frames on multicores. In our experiments, \name{} fully exploits eight cores in each device by creating two concurrent image analysis threads, each using four TensorFlow inference threads. Owing to the parallel image analysis functionality, even when a frame already exists in the queue, an incoming frame does not necessarily wait until the analysis of the existing frame is completed. However, with more than one frame in the queue, the incoming frame cannot help waiting for the completion of one of the frames, possibly missing the latency deadline. \name{} achieves near-zero deadline miss ratios in the strong device environment by keeping the average queue lengths not exceeding one in most cases.

The number of nonzero points is smaller in the weak device environment (WP, WP-WE, and WP-WE-WE) than in the strong device environment due to the low frame rate. However, far more frames miss the deadline in the weak device environment as shown in Figure~\ref{fig:latency}, despite the parallel image analysis. This is due to the large average and deviation of the image analysis times in weak devices, as shown in Figure~\ref{fig:analysis}, which makes frames with large analysis times miss the deadline even when the queue length is zero. Such a case occurs specifically when all the image analysis threads just started analyzing preceding frames, and thus the incoming frame to the empty queue has to wait for a long time.

\subsubsection{Frame Distribution among Workers}
The frame scheduler in \name{} distributes frames across workers based on worker weights, reflecting the real-time computing capacity of workers. In a stable environment and also in an unstable environment with joining or leaving devices, it is designed to distribute frames in proportion to the native computing power of each device, although slight, instantaneous variations may occur depending on the real-time states of devices. With user interactions imposed, it also incorporates the changing computing capacity due to user-directed jobs when distributing frames.

\begin{figure}[t]
    \centering
    \includegraphics[width=0.8\columnwidth]{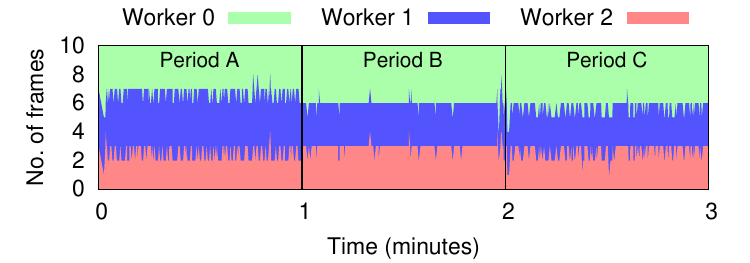}
    \caption{Frame distribution with user interactions in the SP-SE-WE environment. Period A has no interaction. Period B has interaction occurring in a strong external device. Period C has interactions occurring in both strong and weak external devices. Workers 0, 1, and 2 are in the strong primary, strong external, and weak external devices, respectively. }
    \label{fig:frame-dist}
\end{figure}

Figure~\ref{fig:frame-dist} shows the number of assigned frames to each worker whenever a new inner or outer worker sequence (length = 10) is generated in the experiment for Figure~\ref{fig:ui}. For better visibility, we focus on a short interval (0\myapprox3 minutes) which includes every kind of user interaction. In period A, frames are distributed in a 3.40:3.99:2.61 ratio, on average, which well reflects the native computing capacity of each device. In period B, the average ratio changes to 3.99:3.01:3.00, reflecting the reduced available computing capacity by the user interaction in the strong external device where worker 1 resides. In period C, the frame scheduler assigns fewer frames to workers in both external devices, resulting in a 4.29:3.12:2.59 ratio, reflecting user interactions in the devices.

\section{Lessons Learned}\label{sec:lesson}
In this section, we present meaningful lessons we have learned while evaluating the proposed system.
\begin{itemize}[leftmargin=*]
   \item \textbf{Number of devices v.s. individual device power}\\
   As shown in Figures~\ref{fig:throughput} and \ref{fig:latency}, the number of devices largely contributes to the frame processing throughput but has less impact on the latency. On the contrary, the individual device power greatly affects the frame processing latency since the image analysis time that depends on the CPU power takes a large portion of the latency. In terms of latency, the individual device power is more important than the number of devices. However, since a low frame rate incurs a large unavoidable delay incurred by the inter-frame time gap, it is necessary to maintain a sufficiently large frame rate for all video sources. As we use more video sources, the per-source frame rate decreases, necessitating more devices to maintain the frame rate. In conclusion, the device's power matters with a small number of dashcams while both the number and power of devices are important with many dashcams.
   \item \textbf{Internal device network matters}\\
   \name{} consumes a large amount of network bandwidth due to the frame-unit video transfer. For example, the video files we used for experiments have 116 KB (outer) and 101 KB (inner) of average frame sizes after compression, which consumes approximately 52 Mbps of network bandwidth (assuming 30 FPS) to be transferred from dashcams to the coordinator. Considering that frames assigned to external workers need to be transferred again to external devices, \name{} consumes up to 87 Mbps of bandwidth for video transfers when using two external devices, which imposes a heavy load on the network constructed via Wi\nobreakdash-Fi Direct or hotspot. Due to the heavy load, we sometimes observed abnormal behaviors of the primary device that relays the videos, e.g., stopping data transfer to a specific external device for a while and resuming it. Such phenomena occurred when we used Wi\nobreakdash-Fi 5 (802.11ac) with 2.4 GHz frequency but mostly disappeared when using 5 GHz frequency which provides a higher speed than 2.4 GHz. Noting that Wi\nobreakdash-Fi 5 is known to support typically higher speed than 87 Mbps even with 2.4 GHz within a short range as in the vehicle, our observation reveals that sufficiently higher bandwidth is required for the stable operation of \name{}.
\end{itemize}

\section{Future Work}
We plan to further enhance the proposed system by incorporating two additional functionalities into \name{}.

\begin{itemize}[leftmargin=*]
\item \textbf{Thermal Control-aware Analysis Model Selection}\\
As shown in all experimental results, the thermal governor in Android noticeably drops the CPU frequency as the device is heated, which directly affects the image analysis time. After a long execution of \name{}, the average image analysis time increases by about 30\% due to the decreased CPU frequency (Figure~\ref{fig:analysis}). And the frame rate decreases accordingly, showing practically unacceptable rates (about 4\myapprox5 FPS per dashcam) when using a single weak device (Figure~\ref{fig:throughput-weak}).

As a potential solution, we may install multiple image analysis models with different complexities from each other and dynamically select one considering the CPU frequency. For example, the object detection and pose estimation models we used in \name{} have other sibling models with various complexities. We leave it as one of our future work.

\item \textbf{Overdue Frame Dropping}\\
Video analysis for driving safety requires timely analysis of each frame without excessive delay. Although the weighted frame scheduling and dynamic frame rate control in \name{} uses the estimated computing capacity to minimize the overall frame analysis latency, there is an inevitable time gap from the computing capacity changes to its remedy becomes effective. Also, the estimated computing capacity may not accurately represent the real capacity. Due to these reasons, frame queues in workers can have overdue frames whose frame processing latency cannot meet the latency deadline and thus the analysis results would be useless. Processing these frames unnecessarily consumes system resources and more importantly, it delays processing upcoming frames, accumulating overdue frames in the queue for a while.

As a potential solution, we may implement workers to identify and discard overdue frames without further processing. Specifically, whenever a worker $w_i$ dequeues a frame $f$ from its frame queue for analysis, it measures the elapsed time of the frame in the queue ($T_q^f$) and estimates the overall latency of the frame by
$T^f = 2T_F + T_q^f + T_A +T_R$, where $T_A$ is either $T_I^i$ (if $f$ is an inner frame) or $T_O^i$ (otherwise). If $T^f$ exceeds the latency deadline ($L_D$) then it discards the frame without analysis. However, such frame dropping may, in turn, introduce another problem. The missing performance index records for the dropped frames may lead to an incorrect estimation of the worker computing capacity. While the existence of dropped frames in a worker indicates the overloaded state of the worker, the coordinator cannot recognize it due to the missing performance index records for the frames. Hence, workers also need to send their performance index records to the coordinator, embedding precise information about the worker state, so that the coordinator can use it for the worker weight calculation. We leave it as another future work.
\end{itemize}

\section{Conclusion}\label{sec:conclusion}
In this paper, we have addressed issues of real-time dashcam video analytics using resource-constrained in-vehicle edge devices. As a solution, we have developed \name{} that provides stable and low-latency video analytics for the timely detection of road hazards and driver distractedness by exploiting consolidated computing resources across one or more edge devices. \name{} incorporates several techniques including pipelined task execution, weighted frame scheduling, and dynamic frame rate control to achieve its goal even under dynamically changing resource provisioning environments. We have implemented \name{} in an Android app, along with a dashcam emulation app that can be used in vehicles without dashcams. Experimental results using real-world datasets under realistic scenarios showed the feasibility of real-time dashcam video analytics for driving safety, providing an effective use case of dashcam video data that is otherwise discarded without any use.

\bibliographystyle{elsarticle-num}
\bibliography{references}

\end{document}